\patchcmd\longtable{\par}{\if@noskipsec\mbox{}\fi\par}{}{}
\def\maxwidth{\ifdim\Gin@nat@width>\linewidth\linewidth\else\Gin@nat@width\fi}
\def\maxheight{\ifdim\Gin@nat@height>\textheight\textheight\else\Gin@nat@height\fi}
\def\fps@figure{htbp}
\providecommand{\tightlist}{%
  \setlength{\itemsep}{0pt}\setlength{\parskip}{0pt}}
\providecommand{\subtitle}[1]{%
  \usepackage{titling}
  \posttitle{%
    \par\large#1\end{center}}
}
\def\E{\mathsf{E}}
\def\var{\mathsf{var}}
\def\TVaR{\mathsf{TVaR}}
\def\cov{\mathsf{cov}}
\def\Pr{\mathsf{Pr}}
\def\dint{\displaystyle\int}
\newtheorem{theorem}{Theorem}
\newtheorem{proposition}{Proposition}
\newtheorem{corollary}{Corollary}
\newtheorem{definition}{Definition}
\pgfplotsset{width=7cm,compat=1.8}
\newlength{\cslhangindent}
\newenvironment{cslreferences}%
  {}%
  {\par}
\title{Pricing and Capital Allocation for Multiline Insurance Firms With
Finite Assets in an Imperfect Market}
\author{John A. Major \and Stephen J. Mildenhall}
\date{Created 2020-08-27 12:24:17.937134}
\begin{document}
\maketitle
\begin{abstract}
We analyze multiline pricing and capital allocation in equilibrium
no-arbitrage markets. Existing theories often assume a perfect complete
market, but when pricing is linear, there is no diversification benefit
from risk pooling and therefore no role for insurance companies. Instead
of a perfect market, we assume a non-additive distortion pricing
functional and the principle of equal priority of payments in default.
Under these assumptions, we derive a canonical allocation of premium and
margin, with properties that merit the name the natural allocation. The
natural allocation gives non-negative margins to all independent lines
for default-free insurance but can exhibit negative margins for low-risk
lines under limited liability. We introduce novel conditional
expectation measures of relative risk within a portfolio and use them to
derive simple, intuitively appealing expressions for risk margins and
capital allocations. We give a unique capital allocation consistent with
our law invariant pricing functional. Such allocations produce returns
that vary by line, in contrast to many other approaches. Our model
provides a bridge between the theoretical perspective that there should
be no compensation for bearing diversifiable risk and the empirical
observation that more risky lines fetch higher margins relative to
subjective expected values.

JEL Codes: G22, G10
\end{abstract}

\hypertarget{introduction}{%
\section{Introduction}\label{introduction}}

The complete perfect market paradigm dominates financial models of
insurance pricing developed since the 1970s. These models imply the
existence of an additive valuation rule, meaning there is no benefit
from diversification and no role for insurer intermediaries. Moreover,
they typically provide no compensation for bearing diversifiable risk.
Milestones in this approach include the use of discounted cash flows,
Myers and Cohn (\protect\hyperlink{ref-Myers1987}{1987}) and Cummins
(\protect\hyperlink{ref-Cummins1990a}{1990}), and options pricing
Doherty and Garven (\protect\hyperlink{ref-Doherty1986}{1986}) and
Cummins (\protect\hyperlink{ref-Cummins1988}{1988}). Phillips, Cummins,
and Allen (\protect\hyperlink{ref-Phillips1998}{1998}) and Myers and
Read Jr. (\protect\hyperlink{ref-Myers2001}{2001}) considered multiple
line pricing, allowing for ex ante default rules, which Sherris
(\protect\hyperlink{ref-Sherris2006a}{2006}) and Ibragimov, Jaffee, and
Walden (\protect\hyperlink{ref-Ibragimov2010}{2010}) extended to ex post
rules. From Phillips's contribution forward, there is a focus on what
happens in default states to explain premiums by line. Aside from
non-diversifiable risk and tail-driven default, these models do not
reflect the overall distribution of losses, or what we will call the
\emph{shape of risk}.

Holding capital in an intermediary insurance company introduces
frictional costs, which Cummins
(\protect\hyperlink{ref-Cummins2000}{2000}) explains are primarily
driven by agency conflict, taxes, and regulation. Why would insureds
incur the extra expense of buying through an insurer in a perfect
market? Ibragimov, Jaffee, and Walden
(\protect\hyperlink{ref-Ibragimov2010}{2010}) assume insureds do not
have direct access to ultimate capital providers or that there are
additional costs to do so. This somewhat unsatisfactory explanation was
anticipated by Grundl and Schmeiser
(\protect\hyperlink{ref-Grundl2007}{2007}), who pointed out that pricing
does not depend on allocation in a perfect market. Bauer and Zanjani
(\protect\hyperlink{ref-Bauer2013}{2013}) consider capital allocation in
the context of allocating frictional costs.

The fundamental theorem of asset pricing states that in a perfect market
the existence of an additive valuation rule is essentially equivalent to
no arbitrage, Ross (\protect\hyperlink{ref-Ross1978}{1978}), Dybvig and
Ross (\protect\hyperlink{ref-Dybvig1989}{1989}), Delbaen and
Schachermayer (\protect\hyperlink{ref-Delbaen1994}{1994}). The thought
that insurance valuation rules must be additive has exerted a
considerable influence on theory and practice, Borch
(\protect\hyperlink{ref-Borch1982}{1982}), Venter
(\protect\hyperlink{ref-Venter1991}{1991}). But while indubitably
competitive, the insurance market is neither perfect nor complete.

Imperfect or incomplete pricing paradigms isolate different failures of
the perfect complete model but tend to have very similar implications.
They result in non-additive pricing functionals that are often additive
on comonotonic risks, and that can be expressed as a worst-case
expectation over a set of probability distribution outcomes. Wang
(\protect\hyperlink{ref-Wang1996}{1996}) and Wang, Young, and Panjer
(\protect\hyperlink{ref-Wang1997}{1997}) apply a non-additive functional
using distorted probabilities to insurance pricing, leveraging diverse
theoretical underpinnings including Huber
(\protect\hyperlink{ref-Huber1981}{1981}), Schmeidler
(\protect\hyperlink{ref-Schmeidler1986}{1986}), Schmeidler
(\protect\hyperlink{ref-Schmeidler1989}{1989}), Yaari
(\protect\hyperlink{ref-Yaari1987}{1987}), and Denneberg
(\protect\hyperlink{ref-Denneberg1994}{1994}). Within this theory,
distortion risk measures (DRM) occur repeatedly and in many guises.
Kusuoka (\protect\hyperlink{ref-Kusuoka2001}{2001}) and Acerbi
(\protect\hyperlink{ref-Acerbi2002b}{2002}) characterize DRMs as
coherent, law invariant, and comonotonic additive functionals.

DRMs are easy to apply in practice and have many appealing properties.
However, they are not additive and include transaction costs, via an
implied bid-ask spread, Castagnoli, Maccheroni, and Marinacci
(\protect\hyperlink{ref-Castagnoli2004}{2004}). We must ask whether they
are consistent with arbitrage-free pricing. Fortunately, the answer is
yes. The presence of transaction costs render apparent arbitrage
opportunities impractical and so a non-additive pricing rule can still
be arbitrage-free.

Results from Chateauneuf, Kast, and Lapied
(\protect\hyperlink{ref-Chateauneuf1996a}{1996}), De Waegenaere
(\protect\hyperlink{ref-DeWaegenaere2000}{2000}), and especially
Castagnoli, Maccheroni, and Marinacci
(\protect\hyperlink{ref-Castagnoli2002}{2002}) and De Waegenaere, Kast,
and Lapied (\protect\hyperlink{ref-DeWaegenaere2003}{2003}) produce
general equilibrium models that allow for non-additive prices, and show
that DRM pricing---in the equivalent guise as a Choquet integral---is
consistent with general equilibrium. As a result, it is legitimate to
use DRMs to price risk transfer. They return a premium that charges for
the shape of risk, even diversifiable risk. DRMs can be applied both to
non-intermediated, direct-to-investor pricing and insurer intermediated
pricing, where insurers sell the risk to investors.

Why is it reasonable to charge for the shape of risk, especially when it
is diversifiable? There are two related reasons: ambiguity aversion and
winner's curse.

Insurance pricing is a horse lottery, not a roulette lottery; it relies
on subjective probabilities Anscombe and Aumann
(\protect\hyperlink{ref-Anscombe1963}{1963}). Losses are ambiguous and
investors are ambiguity averse. Zhang
(\protect\hyperlink{ref-Zhang2002}{2002}) and Klibanoff, Marinacci, and
Mukerji (\protect\hyperlink{ref-Klibanoff2005}{2005}) describe ambiguity
relevant to insurance pricing. The latter paper has been applied in an
insurance context by Robert and Therond
(\protect\hyperlink{ref-Robert2014}{2014}), Dietz and Walker
(\protect\hyperlink{ref-Dietz2017}{2017}) and Jiang, Escobar-Anel, and
Ren (\protect\hyperlink{ref-Jiang2020}{2020}). Epstein and Schneider
(\protect\hyperlink{ref-Epstein2008}{2008}) can be applied equally to
underwriters who often weight bad news more heavily than good. Premium
ambiguity is confounded with risk because more risky classes of business
tend to have more ambiguous premiums. The failure of the terrorism
insurance market is a case in point.

The second reason is the winner's curse: the winning bid is biased low
when there are multiple quotes, Thaler
(\protect\hyperlink{ref-Thaler1998}{1988}). The insurance market is very
competitive and is characterized by big-data, predictive modeling
pricing with heterogeneous insureds. Competing classification plans
exacerbate winner's curse. As a result, a margin over subjective
expected loss, even when subjective probabilities are unbiased, is
justified. Winner's curse will be correlated to ambiguity: greater
ambiguity will result in wider quote dispersion. The winner's curse
margin will appear in quotes but will not appear ex post in results. The
winner's curse is distinct from adverse selection, D'Arcy and Doherty
(\protect\hyperlink{ref-Doherty1990a}{1990})

In this paper, we consider applying a DRM pricing functional to
individual policies. We do this in two steps. First, we consider pricing
without insurers and associated frictional costs. In this world,
insureds contract directly with investor capital providers. The capital
providers can be risk-neutral, but they are ambiguity averse. Uncertain
subjective probabilities drive the market; there are no objective
probabilities. Investors may be willing to write a roulette lottery at
cost, but they are unwilling to write a horse lottery at cost because
they are ambiguity averse and because they know their winning quotes
will be biased below subjective expectation. The DRM incorporates a
margin to allow for ambiguity aversion and help correct for the winner's
curse.

Non-additive DRMs create a benefit to independent insureds who pool
their risks. Estimation risk, process risk, and entropy are all lower
for the pool than for the individual risks, resulting in less ambiguous
subjective probabilities. Pool pricing is closer to the subjective
expected value, lowering the average premium for pool participants.
These conclusions do not hold if the insureds are not independent, for
example if the risk is driven by catastrophe events. Boonen, Tsanakas,
and Wüthrich (\protect\hyperlink{ref-Tsanakas2016a}{2017}) and
Mildenhall (\protect\hyperlink{ref-Mildenhall2017b}{2017}) discuss the
important differences between the independent and dependent cases.

Pool members still have the problem of allocating any gains from
diversification. We show there is a unique way to do this consistent
with the value of insurance cash flows and an assumption of equal
priority in default, but relying on no other inputs. We call our method
the natural allocation. Our approach is similar to Ibragimov, Jaffee,
and Walden (\protect\hyperlink{ref-Ibragimov2010}{2010}). We model fair
value to insureds rather than marginal costs to the pool. The pool is a
transparent, contractual pass-through.

Our method identifies the conditional expectation of policy losses given
total pool loss as a controlling function, which we call \(\kappa\). It
is exquisitely sensitive to relative risk within a portfolio. We explain
our allocation in reference to the relative consumption of more or less
ambiguous, and hence costly, capital layers.

Risk pools have a clear role in our model: they minimize ambiguity-based
risk costs. Pools accumulate risk to create more credible pricing
signals, credible being synonymous with stable and low volatility.
Stable pools will be financed more cheaply by investors than more
ambiguous pools, or than single risks. A similar data-centric role for
insurance pools was suggested in Froot and O'Connell
(\protect\hyperlink{ref-Froot2008}{2008}).

A risk pool does not have to be structured as an insurance company
intermediary. Intermediaries emerge as pool managers look to bolster
their credibility with investors by putting skin in the game and
assuming risk. This provides a solid rationale for the existence of
insurers versus pools managed by non-risk bearing underwriting managers.
Interestingly, non-risk-bearing managers are now quite common in the
property catastrophe reinsurance market. There, credibility is enhanced
by reliance on independent and highly regarded third-party catastrophe
models. Such models do not exist for most lines. Generally, aggregated
data is needed to price individual policies fairly.

Insurance company pools are successful if they lower the cost of pure
risk transfer by an amount that more than offsets their frictional costs
of holding capital. Individual policy pricing still requires an
allocation of frictional costs. Since frictional costs are usually
regarded as a flat tax on capital, this can be done via a capital
allocation. We show there is a unique ``natural'' way to perform the
allocation, giving a complete solution to pooled risk insurance pricing
under a DRM. Like the premium allocation, the capital allocation only
assumes aggregate pricing by a DRM and equal priority in default. It
relies on the fact that DRMs are law invariant.

Although many of the results in the paper have been known, at least in
principle, for many years, we believe it contains several noteworthy
results. In particular, we highlight the following contributions.

\begin{itemize}
\tightlist
\item
  We analyze multiline pricing and capital allocation in equilibrium
  no-arbitrage markets. We address the allocation of risk measures from
  the insured's perspective, as in Ibragimov, Jaffee, and Walden
  (\protect\hyperlink{ref-Ibragimov2010}{2010}), but generalize their
  work to pricing in imperfect and incomplete markets with non-additive
  functionals.
\item
  We find there is a canonical allocation of premium and margin in a
  market where prices are determined by a distortion risk measure with a
  rule of equal priority in default. The allocation relies on no
  additional assumptions and we believe it merits the name the natural
  allocation. The natural allocation gives non-negative margins to all
  independent lines for default-free insurance but can exhibit negative
  margins for low-risk lines under limited liability. The natural
  allocation reveals subtle interactions between margin, default, and
  idiosyncratic risk.
\item
  We introduce novel conditional expectation measures of relative risk
  within a portfolio and use them to derive simple, intuitively
  appealing expressions for risk margins and capital allocations
  consistent with law invariant DRM pricing functionals.
\item
  We give a unique capital allocation consistent with our law invariant
  pricing functional. Such allocations produce returns that vary by
  line, in contrast to many other approaches.
\item
  We illustrate the theory with examples. The examples elucidate the
  interplay of loss and margin between lines at various levels of
  portfolio risk. We demonstrate how the cost of the shape of risk
  reflects a complex interaction between the relative consumption of low
  layer, certain, high loss ratio assets, and high layer, uncertain, low
  loss ratio assets.
\item
  In contrast with cost allocation-based approaches, we find that margin
  by line is more driven by behavior in solvent states than in default
  states. Our results hold even when there is no possibility of default.
\item
  Our model provides a bridge between the theoretical perspective that
  there should be no compensation for bearing diversifiable risk and the
  empirical observation that more risky lines fetch higher margins
  relative to subjective expected values.
\end{itemize}

The remainder of the paper is structured as follows. Section 2 recalls
the definition and properties of a DRM. Section 3 states and proves
canonical formulas giving the natural allocation of loss and premium by
policy under a DRM, and shows margins are non-negative for independent
risks. Section 4 discusses the \(\kappa\) and two associated functions
that control the natural allocation and that are informative in their
own right. Section 5 derives some properties of the natural margin
allocation by policy. Section 6 extends the natural premium allocation
to a natural allocation of capital. This allows frictional costs to be
allocated. Section 7 gives two examples, one simple and one more
realistic. Finally, Section 8 offers concluding remarks, discusses
limitations, and makes suggestions for further research.

\hypertarget{notation-and-conventions}{%
\subsection*{Notation and Conventions}\label{notation-and-conventions}}
\addcontentsline{toc}{subsection}{Notation and Conventions}

We consider insurance written directly by investors or intermediated by
an insurance company. We distinguish, when necessary, by saying
investor-written, or intermediary-written or intermediated insurance.
When irrelevant, we just say the insurance is written by a provider. In
either case, insurance consists of a pool of individual policies written
for insureds. Policies last one period, with premium collected at
\(t=0\) and losses paid in full at \(t=1\). Line of business or business
unit or other grouping can be substituted for policy. We refer to the
components as lines or policies, whichever is most appropriate.

When insurance is supported by finite assets, the provider has limited
liability.

When intermediated, the intermediary is a stock insurer. At \(t=0\) it
sells its residual value to investors to raise equity. At time one it
pays claims up to the amount of assets available. If assets are
insufficient to pay claims it defaults. If there are excess assets they
are returned to investors.

The terminology describing risk measures is standard, and follows
Föllmer and Schied (\protect\hyperlink{ref-Follmer2011}{2011}). We work
on a standard probability space, Svindland
(\protect\hyperlink{ref-Svindland2010}{2009}), Appendix. It can be taken
as \(\Omega=[0,1]\), with the Borel sigma-algebra and \(\mathsf P\)
Lebesgue measure. The indicator function on a set \(A\) is \(1_A\),
meaning \(1_A(x)=1\) if \(x\in A\) and \(1_A(x)=0\) otherwise.

Total insured loss, or total risk, is described by a random variable
\(X\ge 0\). \(X\) reflects policy limits but is not limited by provider
assets. \(X=\sum_i X_i\) describes the split of losses by policy. \(F\),
\(S\), \(f\), and \(q\) are the distribution, survival, density, and
(lower) quantile functions of \(X\). Subscripts are used to
disambiguate, e.g., \(S_{X_i}\) is the survival function of \(X_i\).
\(X\wedge a\) denotes \(\min(X,a)\) and \(X^+=\max(X,0)\).

The letters \(S\), \(P\), \(M\) and \(Q\) refer to expected loss,
premium, margin and equity, and \(a\) refers to assets. The value of
survival function \(S(x)\) is the loss cost of the insurance paying
\(1_{\{X>x\}}\), so the two uses of \(S\) are consistent. Premium equals
expected loss plus margin; assets equal premium plus equity. All these
quantities are functions of assets underlying the insurance.

We use the actuarial sign convention: large positive values are bad. Our
concern is with quantiles \(q(p)\) for \(p\) near 1. Distortions are
usually reversed, with \(g(s)\) for small \(s=1-p\) corresponding to bad
outcomes. As far as possible we will use \(p\) in the context \(p\)
close to 1 is bad and \(s\) when small \(s\) is bad.

Tail value at risk is defined for \(0\le p<1\) by \[
\TVaR_p(X) = \frac{1}{1-p}\int_p^1 q(t)dt.
\]

Prices exclude all expenses. The risk free interest rate is zero. These
are standard simplifying assumptions, e.g.~Ibragimov, Jaffee, and Walden
(\protect\hyperlink{ref-Ibragimov2010}{2010}).

\hypertarget{distortion-risk-measures-and-pricing-functionals}{%
\section{Distortion Risk Measures and Pricing
Functionals}\label{distortion-risk-measures-and-pricing-functionals}}

We define DRMs and recall results describing their different
representations. By De Waegenaere, Kast, and Lapied
(\protect\hyperlink{ref-DeWaegenaere2003}{2003}) DRMs are consistent
with general equilibrium and so it makes sense to consider them as
pricing functionals. The DRM is interpreted as the (ask) price for an
investor-written risk transfer. The rest of the paper will explore
multi-policy pricing implied by a DRM.

\medskip

\begin{definition} A {\bf distortion function} is an increasing concave function $g:[0,1]\to [0,1]$ satisfying $g(0)=0$ and $g(1)=1$.

A {\bf distortion risk measure} $\rho_g$ associated with a distortion $g$ acts on a non-negative random variable $X$ as
\begin{equation}
\rho_g(X) = \int_0^\infty g(S(x))dx.
\label{eq:drm-def}
\end{equation}
\end{definition}

The simplest distortion if the identity \(g(s)=s\). Then
\(\rho_g(X)=\E[X]\) from the integration-by-parts identity \[
\int_0^\infty S(x)\,dx = \int_0^\infty xdF(x).
\] Other well-known distortions include the proportional hazard
\(g(s)=s^r\) for \(0<r\le 1\) and the Wang transform
\(g(s)=\Phi(\Phi^{-1}(s)+\lambda)\) for \(\lambda \ge 0\), Wang
(\protect\hyperlink{ref-Wang1995}{1995}).

Since \(g\) is concave \(g(s)\ge 0g(0) + sg(1)=s\) for all
\(0\le s\le 1\), showing \(\rho_g\) adds a non-negative margin.

Going forward, \(g\) is a distortion and \(\rho\) is its associated
distortion risk measure. We interpret \(\rho\) as a pricing functional
and refer to \(\rho(X)\) as the price or premium for investor-written
insurance on \(X\). When we price intermediated insurance we need to add
frictional costs of holding capital. This is considered in
\cref{intermediated-pricing}.

DRMs are translation invariant, monotonic, subadditive and positive
homogeneous, and hence coherent, Acerbi
(\protect\hyperlink{ref-Acerbi2002b}{2002}). In addition they are law
invariant and comonotonic additive. In fact, all such functionals are
DRMs. As well has having these properties, DRMs are powerful because we
have a complete understanding of their representation and structure,
which we summarize in the following theorem.

\begin{theorem}\label{thm:coherent}
% Let $(\Omega,\mathscr F, \mathsf P)$ be a standard probability space.
Subject to $\rho$ satisfying certain continuity assumptions, the following are equivalent.
\begin{enumerate}
\item $\rho$ is a law invariant, coherent, comonotonic additive risk measure.
\item $\rho=\rho_g$ for a concave distortion $g$.
\item $\rho$ has a representation as a weighted average of TVaRs for a measure $\mu$ on $[0,1]$ $\rho(X)=\int_0^1 \TVaR_p(X)\mu(dp)$.
\item $\rho(X)=\max_{\mathsf Q\in\mathscr{Q}} \E_{\mathsf Q}[X]$ where $\mathscr{Q}$ is the set of (finitely) additive measures with $\mathsf Q(A)\le g(\mathsf P(A))$ for all measurable $A$.
\item $\rho(X)=\max_{\mathsf Z\in\mathscr{Z}} \E[XZ]$ where $\mathscr{Z}$ is the set of positive functions on $\Omega$ satisfying $\int_p^1 q_Z(t)dt \le g(1-p)$, and $q_Z$ is the quantile function of $Z$.
\end{enumerate}
\end{theorem}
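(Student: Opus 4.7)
The plan is to establish the cycle $(1) \Rightarrow (2) \Rightarrow (3) \Rightarrow (1)$ and, separately, the block $(2) \Leftrightarrow (4) \Leftrightarrow (5)$; the unspecified continuity hypothesis (a Fatou property, or Lebesgue continuity from below) is what permits extension of the various representations from bounded step functions to arbitrary $X\ge 0$.

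For $(1) \Rightarrow (2)$ I would invoke Schmeidler's representation theorem: any monotone, translation-invariant, comonotonic additive functional on bounded random variables is the Choquet integral $\rho(X) = \int_0^\infty c(\{X>x\})\,dx$ with respect to the capacity $c(A) := \rho(1_A)$. Law invariance forces $c(A)$ to depend on $A$ only through $\mathsf P(A)$, so $c = g\circ\mathsf P$ for some increasing $g:[0,1]\to[0,1]$ with $g(0)=0$ and $g(1)=1$. Concavity of $g$ falls out of subadditivity: on a standard non-atomic space, given $s,t$ with $s+t\le 1$ one can pick disjoint $A,B$ with $\mathsf P(A)=s$ and $\mathsf P(B)=t$; subadditivity applied to $1_A$ and $1_B$ gives $g(s+t)\le g(s)+g(t)$, and combined with $g(0)=0$ and monotonicity this upgrades to concavity on $[0,1]$. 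The continuity hypothesis then extends the representation from step functions to $L^1$ by monotone approximation.

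For $(2)\Rightarrow(3)$ I would decompose each concave distortion as a mixture of the elementary TVaR distortions $g_p(s) := \min(s/(1-p),\,1)$, namely $g(s)=\int_{[0,1]} g_p(s)\,\mu(dp)$ for a probability measure $\mu$ derived (via integration by parts) from $-dg'$. Substituting into \eqref{eq:drm-def} and applying Fubini gives the TVaR mixture, since a direct computation of $\int_0^\infty g_p(S(x))\,dx$ shows $\rho_{g_p}(X)=\TVaR_p(X)$. The implication $(3)\Rightarrow(1)$ is immediate: $\TVaR_p$ is itself law invariant, coherent, and comonotonic additive (the last from $q_{X_1+X_2}=q_{X_1}+q_{X_2}$ for comonotonic pairs), and each of these properties survives mixing over $\mu$.

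For $(2)\Leftrightarrow(4)$ I would invoke the duality between the Choquet integral with respect to a concave capacity and its core. The set $\mathscr{Q}$ is nonempty and weakly compact, and the supremum is attained by the ``comonotonic rearrangement'' $\mathsf Q^*$ obtained by matching the layers of $X$ against $g'(\mathsf P(X>x))$; concavity of $g$ guarantees $\mathsf Q^*\in\mathscr{Q}$, and a layer-cake computation recovers $\rho_g(X)=\E_{\mathsf Q^*}[X]$. For $(4)\Leftrightarrow(5)$, restrict to countably additive $\mathsf Q\ll\mathsf P$ with density $Z=d\mathsf Q/d\mathsf P$ so that $\E_{\mathsf Q}[X]=\E[XZ]$; specializing the capacity bound to the super-level sets $\{Z>q_Z(p)\}$ and applying the Hardy--Littlewood--P\'olya rearrangement inequality establishes the equivalence of $\mathsf Q(A)\le g(\mathsf P(A))$ for all $A$ with the quantile-integral bound $\int_p^1 q_Z(t)\,dt\le g(1-p)$.

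The main obstacle lies at two points. First, in $(1)\Rightarrow(2)$, upgrading subadditivity of $\rho$ to genuine \emph{concavity} (not just subadditivity) of $g$, which really needs the non-atomic structure of the standard probability space to realize prescribed probabilities as disjoint events. Second, in $(2)\Leftrightarrow(4)$ and $(2)\Leftrightarrow(5)$, showing that the suprema are attained and that the rearrangement construction produces admissible elements of $\mathscr{Q}$ or $\mathscr{Z}$; both rest on the continuity hypothesis and on the Kusuoka-type regularity of the underlying probability space, and together are where the technical content of the theorem really sits.
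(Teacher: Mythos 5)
The paper does not actually prove this theorem; it assembles it by citation (F\"ollmer--Schied 4.79--4.95, Delbaen, Kusuoka, Carlier--Dana), and your outline follows essentially the same standard route as those sources: Schmeidler/Choquet representation for (1)$\Rightarrow$(2), the mixture of $\TVaR$ distortions $g_p(s)=\min(s/(1-p),1)$ for (2)$\Leftrightarrow$(3), and core/rearrangement duality for (2)$\Leftrightarrow$(4)$\Leftrightarrow$(5). Those parts are sound at the level of a sketch, including your remarks on where the continuity hypothesis and the non-atomic standard space are needed.

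There is, however, one step that fails as written: in (1)$\Rightarrow$(2) you derive $g(s+t)\le g(s)+g(t)$ from subadditivity on \emph{disjoint} sets and assert that subadditivity plus monotonicity and $g(0)=0$ ``upgrades to concavity.'' It does not: an increasing, subadditive $g$ with $g(0)=0$, $g(1)=1$ need not be concave (e.g.\ rescale the piecewise-linear function that rises with slope $1$ on $[0,\tfrac13]$, is flat on $[\tfrac13,\tfrac23]$, and rises with slope $1$ again on $[\tfrac23,1]$; it is subadditive but has a convex kink). What coherence of the Choquet integral actually forces is \emph{submodularity} of the capacity $c=g\circ\mathsf P$, and to get that you must test subadditivity on \emph{overlapping} sets: since $1_A+1_B=1_{A\cup B}+1_{A\cap B}$ and the pair $(1_{A\cup B},1_{A\cap B})$ is comonotonic, comonotonic additivity gives $\rho(1_{A\cup B})+\rho(1_{A\cap B})=\rho(1_A+1_B)\le\rho(1_A)+\rho(1_B)$, i.e.\ $c(A\cup B)+c(A\cap B)\le c(A)+c(B)$. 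Using law invariance and non-atomicity to choose $A,B$ with $\mathsf P(A\cap B)=u$, $\mathsf P(A)=u+h$, $\mathsf P(B)=v$, $\mathsf P(A\cup B)=v+h$ for $u\le v$, this yields the decreasing-increments inequality $g(v+h)-g(v)\le g(u+h)-g(u)$, which (with monotonicity) is what gives concavity of $g$. With that substitution your argument aligns with the proofs in the cited references; without it, the key passage from coherence to a \emph{concave} distortion is unproved.
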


The theorem combines results from Föllmer and Schied
(\protect\hyperlink{ref-Follmer2011}{2011}) (4.79, 4.80, 4.93, 4.94,
4.95), Delbaen (\protect\hyperlink{ref-Delbaen2000}{2000}), Kusuoka
(\protect\hyperlink{ref-Kusuoka2001}{2001}), and Carlier and Dana
(\protect\hyperlink{ref-Carlier2003}{2003}). The theorem requires that
\(\rho\) is continuous from above to rule out the possibility
\(\rho=\sup\). In certain situations, the \(\sup\) risk measure applied
to an unbounded random variable can only be represented as a \(\sup\)
over a set of test measures and not a max. Note that the roles of from
above and below are swapped from Föllmer and Schied
(\protect\hyperlink{ref-Follmer2011}{2011}) because they use the asset,
negative is bad, sign convention whereas we use the actuarial, positive
is bad, convention.

The relationship between \(\mu\) and \(g\) is given by Föllmer and
Schied (\protect\hyperlink{ref-Follmer2011}{2011}) 4.69 and 4.70. The
elements of \(\mathscr Z\) are the Radon-Nikodym derivatives of the
measures in \(\mathscr Q\).

\hypertarget{loss-and-premium-by-policy-and-layer}{%
\section{Loss and Premium by Policy and
Layer}\label{loss-and-premium-by-policy-and-layer}}

This section introduces the idea of layer densities and proves that DRM
premium can be allocated to policy in a natural and unique way.

\hypertarget{layer-densities}{%
\subsection{Layer Densities}\label{layer-densities}}

Risk is often tranched into layers that are then insured and priced
separately. Meyers (\protect\hyperlink{ref-Meyers1996}{1996}) describes
layering in the context of liability increased limits factors and Culp
and O'Donnell (\protect\hyperlink{ref-Culp2009}{2009}), Mango et al.
(\protect\hyperlink{ref-Mango2013}{2013}) in the context of excess of
loss reinsurance.

Define a layer \(y\) excess \(x\) by its payout function
\(I_{(x,x+y]}(X):=(X-x)^+\wedge y\). The expected layer loss is
\begin{align*}
\E[I_{(x,x+y]}(X)] &= \int_x^{x+y} (t-x)dF(t) + yS(x+y) \\
&= \int_x^{x+y} t dF(t) + tS(t)\vert_x^{x+y} \\
&= \int_x^{x+y} S(t)\, dt.
\end{align*} Based on this equation, Wang
(\protect\hyperlink{ref-Wang1996}{1996}) points out that \(S\) can be
interpreted as the layer loss (net premium) density. Specifically, \(S\)
is the layer loss density in the sense that
\(S(x)=d/dx(\E[I_{(0, x]}(X)])\) is the marginal rate of increase in
expected losses in the layer at \(x\). We use density in this sense to
define premium, margin and equity densities, in addition to loss
density.

Clearly \(S(x)\) equals the expected loss to the cover \(1_{\{X>x\}}\).
By scaling, \(S(x)dx\) is the close to the expected loss for
\(I_{(x, x+dx]}\) when \(dx\) is very small; Bodoff
(\protect\hyperlink{ref-Bodoff2007}{2007}) calls these infinitesimal
layers.

Wang (\protect\hyperlink{ref-Wang1996}{1996}) goes on to interpret \[
\int_x^{x+y} g(S(t))\,dt
\] as the layer premium and hence \(g(S(x))\) as the layer premium
density. We write \(P(x):=g(S(x))\) for the premium density.

We can decompose \(X\) into a sum of thin layers. All these layers are
comonotonic with one another and with \(X\), resulting in an additive
decomposition of \(\rho(X)\), since \(\rho\) is comonotonic additive.
The decomposition mirrors the definition of \(\rho\) as an integral,
\cref{eq:drm-def}.

The amount of assets \(a\) available to pay claims determines the
quality of insurance, and premium and expected losses are functions of
\(a\). Premiums are well-known to be sensitive to the insurer's asset
resources and solvency, Phillips, Cummins, and Allen
(\protect\hyperlink{ref-Phillips1998}{1998}). Assets may be infinite,
implying unlimited cover. When \(a\) is finite there is usually some
chance of default. Using the layer density view, define expected loss
\(\bar S\) and premium \(\bar P\) functions as \begin{gather}
\bar S(a):=\E[X\wedge a]=\int_0^a S(x)\,dx \label{eq:sbar-def} \\
\bar P(a):=\rho(X\wedge a) = \int_0^\infty g(S_{X\wedge a}(x))\,dx
=\int_0^a g(S_{X}(x))dx. \label{eq:prem-def}
\end{gather}

Margin is \(\bar M(a):=\bar P(a)-\bar S(a)\) and margin density is
\(M(a)=d\bar M(a)/da\). Assets are funded by premium and equity
\(\bar Q(a):=a-\bar P(a)\). Again \(Q(a)=d\bar Q/da = 1-P(a)\). Together
\(S\), \(M\), and \(Q\) give the split of layer funding between expected
loss, margin and equity. Layers up to \(a\) are, by definition, fully
collateralized. Thus \(\rho(X\wedge a)\) is the premium for a
defaultable cover on \(X\) supported by assets \(a\), whereas
\(\rho(X)\) is the premium for an unlimited, default-free cover.

The layer density view is consistent with more standard approaches to
pricing. If \(X\) is a Bernoulli risk with \(\Pr(X=1)=s\) and expected
loss cost \(s\), then \(\rho(X)=g(s)\) can be regarded as pricing a unit
width layer with attachment probability \(s\). In an intermediated
context, the funding constraint requires layers to be fully
collateralized by premium plus equity---without such funding the
insurance would not be credible since the insurer has no other source of
funds.

Given \(g\) we can compute insurance market statistics for each layer.
The loss, premium, margin, and equity densities are \(s\), \(g(s)\),
\(g(s)-s\) and \(1-g(s)\). The layer loss ratio is \(s/g(s)\) and
\((g(s)-s)/(1-g(s))\) is the layer return on equity. These quantities
are illustrated in \cref{fig:glr} for a typical distortion function. The
corresponding statistics for ground-up covers can be computed by
integrating densities.

\begin{figure}
\centering
\includegraphics[width=\textwidth,height=0.3\textheight]{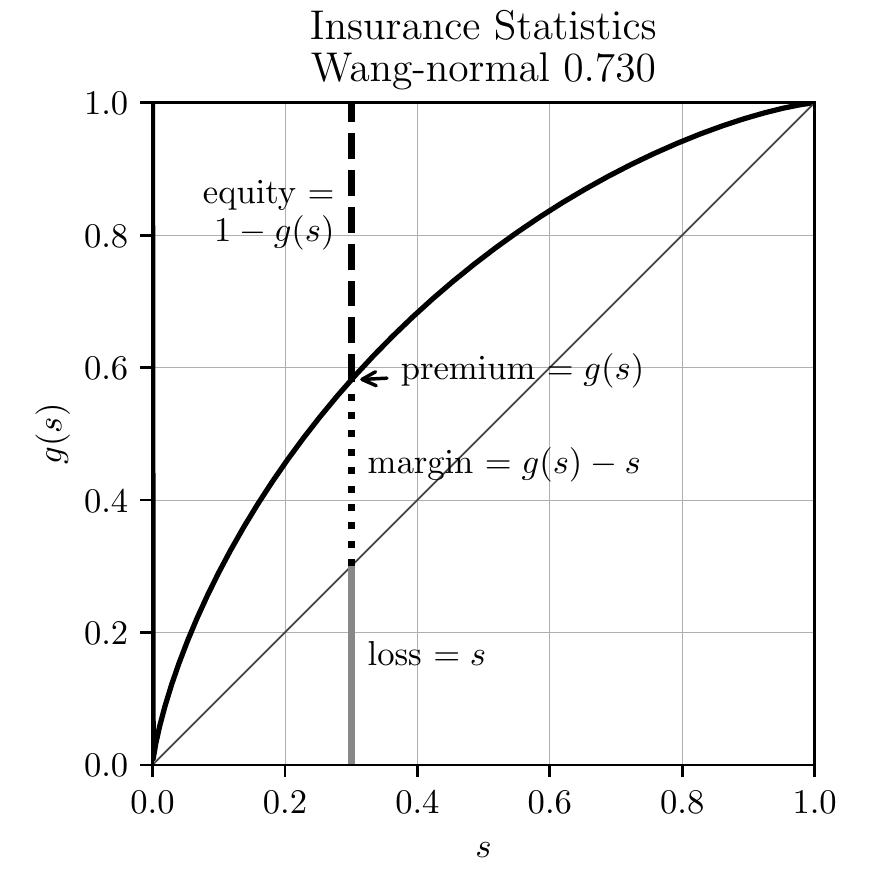}
\caption{Relationship between distortion \(g\) and insurance market
statistics as a function of exceedance probability \(s\).
\label{fig:glr}}
\end{figure}

For an investor-written risk we regard the margin as compensation for
ambiguity aversion and associated winner's curse drag. Both of these
effects are correlated with risk, so the margin is hard to distinguish
from a risk load, but the rationale is different. Again, recall,
although \(\rho\) is non-additive and appears to charge for
diversifiable risk, De Waegenaere, Kast, and Lapied
(\protect\hyperlink{ref-DeWaegenaere2003}{2003}) assures us the pricing
is consistent with a general equilibrium.

The layer density is distinct from models that vary the volume of each
line in a homogeneous portfolio model. Our portfolio is static. By
varying assets we are implicitly varying the quality of insurance.

\hypertarget{the-equal-priority-default-rule}{%
\subsection{The Equal Priority Default
Rule}\label{the-equal-priority-default-rule}}

If assets are finite and the provider has limited liability we need to
to determine policy-level cash flows in default states before we can
determine the fair market value of insurance. The most common way to do
this is using equal priority in default.

Under limited liability, total losses are split between provider
payments and provider default as \[
X = X\wedge a + (X-a)^+.
\] Next, actual payments \(X\wedge a\) must be allocated to each policy.

\(X_i\) is the amount promised to \(i\) by their insurance contract.
Promises are limited by policy provisions but are not limited by
provider assets. At the policy level, equal priority implies the
payments made to, and default borne by, policy \(i\) are split as
\begin{align*}
X_i % &= X_i\frac{X}{X}  \\
&= X_i \frac{X\wedge a}{X} + X_i \frac{(X-a)^+}{X} \\
&= (\text{payments to policy $i$}) + (\text{default borne by policy $i$}).
\end{align*} Therefore the payments made to policy \(i\) are given by
\begin{equation}
X_i(a) := X_i \frac{X\wedge a}{X}
%        = \frac{X_i}{X} \, (X\wedge a)
= \begin{cases}
      X_i  & X \le a \\
      X_i\dfrac{a}{X} & X > a.
  \end{cases}\label{eq:equal-priority}
\end{equation} \(X_i(a)\) is the amount actually paid to policy \(i\).
It depends on \(a\), \(X\) and \(X_i\). The dependence on \(X\) is
critical. It is responsible for almost all the theoretical complexity of
insurance pricing.

It is worth reiterating that with this definition
\(\sum_i X_i(a)=X\wedge a\).

\textbf{Example.} Here is an example illustrating the effect of equal
priority. Consider a certain loss \(X_0=1000\) and \(X_1\) given by a
lognormal with mean 1000 and coefficient of variation 2.0. Prudence
requires losses be backed by assets equal to the 0.9 quantile. On a
stand-alone basis \(X_0\) is backed by \(a_0=1000\) and is risk-free.
\(X_1\) is backed by \(a_1=2272\) and the recovery is subject to a
considerable haircut, since \(\E[X_1\wedge 2272] = 732.3\). If these
risks are pooled, the pool must hold \(a=a_0+a_1\) for the same level of
prudence. When \(X_1\le a_1\) both lines are paid in full. But when
\(X_1 > a_1\), \(X_0\) receives \(1000(a/(1000+X_1))\) and \(X_1\)
receives the remaining \(X_1(a/(1000+X_1))\). Payment to both lines is
pro rated down by the same factor \(a/(1000+X_1)\)---hence the name
\emph{equal} priority. In the pooled case, the expected recovery to
\(X_0\) is 967.5 and 764.8 to \(X_1\). Pooling and equal priority result
in a transfer of 32.5 from \(X_0\) to \(X_1\). This example shows what
can occur when a thin tailed line pools with a thick tailed one under a
weak capital standard with equal priority. We shall see how pricing
compensates for these loss payment transfers, with \(X_1\) paying a
positive margin and \(X_0\) a negative one.

\hypertarget{expected-loss-payments-at-different-asset-levels}{%
\subsection{Expected Loss Payments at Different Asset
Levels}\label{expected-loss-payments-at-different-asset-levels}}

Expected losses paid to policy \(i\) are \(\bar S_i(a) := \E[X_i(a)]\).
\(\bar S_i(a)\) can be computed, conditioning on \(X\), as
\begin{equation}\label{eq:eloss-main}
\bar S_i(a) = \E[\E[X_i(a)\mid X]] = \E[X_i \mid X \le a]F(a) + a\E\left[ \frac{X_i}{X}\mid X>a \right]S(a).
\end{equation} Because of its importance in allocating losses, define
\begin{equation} \label{eq:alpha-def}
\alpha_i(a) := \E[X_i/X\mid X> a].
\end{equation} The value \(\alpha_i(x)\) is the expected proportion of
recoveries by line \(i\) in the layer at \(x\). Since total assets
available to pay losses always equals the layer width, and the chance
the layer attaches is \(S(x)\), it is intuitively clear
\(\alpha_i(x)S(x)\) is the loss density for line \(i\), that is, the
derivative of \(\bar S_i(x)\) with respect to \(x\). We now show this
rigorously.

\medskip

\begin{proposition} Expected losses to policy $i$ under equal priority, when total losses are supported by assets $a$, is given by
\begin{equation} \label{eq:alpha-S}
\bar S_i(a) =\E[X_i(a)] = \int_0^a \alpha_i(x)S(x)dx
\end{equation}
and so the policy loss density at $x$ is $S_i(x):=\alpha_i(x)S(x)$.
\end{proposition}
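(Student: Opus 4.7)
My plan is to avoid differentiating the right-hand side of (5) directly (which would require regularity such as existence of a density for $X$) and instead derive the integral formula in one shot via the layer decomposition of $X\wedge a$ combined with Fubini. The starting observation is the pathwise identity valid for every $X\ge 0$,
$$X\wedge a = \int_0^a 1_{\{X>x\}}\, dx.$$
This is the same identity underlying \cref{eq:sbar-def}, and it is the natural bridge between the aggregate layer view and the policy-level view.

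Combining the identity with the equal priority formula $X_i(a)=X_i(X\wedge a)/X$ (rendered a.s.\ unambiguous by setting $X_i/X:=0$ on $\{X=0\}$, on which $X_i$ itself vanishes because $0\le X_i\le X$) gives
$$X_i(a) = \frac{X_i}{X}\int_0^a 1_{\{X>x\}}\, dx = \int_0^a \frac{X_i}{X}\,1_{\{X>x\}}\, dx.$$
Taking expectations and interchanging expectation and integral by Fubini--Tonelli, which is justified because the integrand is non-negative and bounded above by $1_{\{X>x\}}$, yields
$$\bar S_i(a) = \int_0^a \E\!\left[\frac{X_i}{X}\,1_{\{X>x\}}\right] dx = \int_0^a \E\!\left[\frac{X_i}{X}\,\bigg|\, X>x\right] S(x)\, dx = \int_0^a \alpha_i(x)\,S(x)\, dx,$$
which is the asserted formula. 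The density statement $S_i(x)=\alpha_i(x)S(x)$ then follows immediately from the Lebesgue differentiation theorem, since $x\mapsto \alpha_i(x)S(x)$ is locally integrable.

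The only genuinely delicate point, and the main candidate for an obstacle, is the treatment of $X_i/X$ on $\{X=0\}$; but since $X_i=0$ there as well, this is purely a measure-zero convention issue that has no effect on any integral. As a sanity check, I would verify that specialising to a single line ($X_i=X$, so $\alpha_i\equiv 1$) recovers $\bar S(a)=\int_0^a S(x)\, dx$ from \cref{eq:sbar-def}, and that summing over $i$ restores $\bar S(a)$ because $\sum_i \alpha_i(x) = \E[\sum_i X_i/X \mid X>x] = 1$, consistent with $\sum_i X_i(a)=X\wedge a$. A secondary benefit of this approach is that it delivers \cref{eq:alpha-S} without first needing to establish \cref{eq:eloss-main}: indeed, splitting the outer integral at $a$ and conditioning on $\{X\le a\}$ versus $\{X>a\}$ recovers \cref{eq:eloss-main} as a by-product.
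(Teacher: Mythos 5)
Your proof is correct, and it takes a genuinely different route from the paper's. The paper starts from the two-piece decomposition \cref{eq:eloss-main} (conditioning on $\{X\le a\}$ versus $\{X>a\}$), then differentiates $\alpha_i(a)S(a)$ under the assumption that $X$ has a density $f$ --- obtaining $\frac{d}{da}(\alpha_i(a)S(a)) = -\E[X_i\mid X=a]f(a)/a$ --- and finally runs an integration by parts to show $\int_0^a \alpha_i(x)S(x)\,dx$ reproduces \cref{eq:eloss-main}. Your layer-decomposition-plus-Tonelli argument reaches \cref{eq:alpha-S} in one step, needs no density for $X$ and no integration by parts, and, as you note, yields \cref{eq:eloss-main} as a corollary rather than an input; it also makes the additivity check $\sum_i \alpha_i S = S$ transparent. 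What the paper's longer route buys is the intermediate derivative identity \cref{eq:alpha-prime}, which is reused later (in the summary table, in the analysis of $\alpha_i'$ in Section 4, and as the template for the parallel premium-side computation in Theorem \ref{main-theorem}); your argument does not produce that identity, but it is not needed for the proposition itself. The one point worth stating slightly more carefully is the final density claim: Lebesgue differentiation gives $S_i(x)=\alpha_i(x)S(x)$ only for a.e.\ $x$ (or at continuity points of $\alpha_i S$), which is all the paper uses but is worth flagging.
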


\begin{proof}
By the definition of conditional expectation, $\alpha_i(a)S(a)=\E[(X_i/X)1_{X>a}]$. Conditioning on $X$, using the tower property, and taking out the functions of $X$ on the right shows
$$
\alpha_i(a)S(a)=\E[\E[(X_i/X) 1_{X>a}\mid X]]=\int_a^\infty \E[X_i \mid X=x]\dfrac{f(x)}{x}dx
$$
and therefore
\begin{equation}\label{eq:alpha-prime}
\frac{d}{da}(\alpha_i(a)S(a)) = -\E[X_i \mid X=a]\dfrac{f(a)}{a}.
\end{equation}
Now we can use integration by parts to compute
\begin{align*}
\int_0^a \alpha_i(x)S(x)\,dx
&= x\alpha_i(x)S(x)\Big\vert_0^a + \int_0^a x\,\E[X_i \mid X=x]\dfrac{f(x)}{x}\,dx\\
&= a\alpha_i(a)S(a) + E[X_i \mid X\le a]F(a) \\
&=  \bar S_i(a)
\end{align*}
by \cref{eq:eloss-main}. Therefore the policy $i$ loss density in the asset layer at $a$, i.e. the derivative of \cref{eq:eloss-main} with respect to $a$, is $S_{i}(a)=\alpha_i(a) S(a)$ as required.
\end{proof}

To recap, \cref{eq:alpha-S} gives a direct analog to \cref{eq:sbar-def}
for policy \(i\) losses. Note that \(S_i\) is \emph{not} the survival
function of \(X_i(a)\) nor of \(X_i\). \Cref{eq:alpha-S} is surprising
because it gives a decomposition of \(S\) \emph{through} the convolution
of random variables: \[
\begin{tikzcd}[row sep=scriptsize, column sep=scriptsize]
X       \arrow[r, "="]                 &  \sum_i X_i                           & & X\wedge a     \arrow[r, "="]                & \sum_i X_i(a)                              \\
\E[X]   \arrow[r, "="] \arrow[d, "="]  &  \sum_i \E[X_i]      \arrow[d, "="]   & & \E[X\wedge a] \arrow[r, "="] \arrow[d, "="] & \sum_i \E[X_i(a)]          \arrow[d, "="]  \\
\dint S \arrow[r, "="]                 &  \sum_i \dint S_{X_i}                 & & \dint_0^a S   \arrow[r, "="]                & \sum_i\dint_0^a \alpha_i S                 \\
S \arrow[r, "\not="]                   &  \sum S_{X_i},                        & & S \arrow[r, "="]                            &  \sum \alpha_i S.                          \\
\end{tikzcd}
\]

\hypertarget{premiums-at-different-asset-levels}{%
\subsection{Premiums at Different Asset
Levels}\label{premiums-at-different-asset-levels}}

Premium under \(\rho\) is given by \cref{eq:prem-def}. We can interpret
\(g(S(a))\) as the portfolio premium density in the layer at \(a\). We
now consider the premium and premium density for each policy.

Using integration by parts we can express the price of an unlimited
cover on \(X\) as \begin{equation}
\label{eq:nat1}
\rho(X)=\int_0^\infty g(S(x))\,dx = \int_0^\infty xg'(S(x))f(x)\,dx = \E[Xg'(S(X)))].
\end{equation} It is important that this integral is over all \(x\ge 0\)
so the \(xg(S(x))\vert_0^a\) term disappears. \Cref{eq:nat1} makes sense
because a concave distortion is continuous on \((0,1]\) and can have at
most countably infinitely many points where it is not differentiable (it
has a kink). In total these points have measure zero, Borwein and
Vanderwerff (\protect\hyperlink{ref-Borwein2010}{2010}), and we can
ignore them in the integral. For more details see Dhaene et al.
(\protect\hyperlink{ref-Dhaene2012b}{2012}).

By \Cref{eq:nat1}, and the properties of a distortion function,
\(g'(S(X))\) is the Radon-Nikodym derivative of a measure \(\mathsf Q\)
with \(\rho(X)=\E_{\mathsf Q}[X]\). In fact,
\(\E_{\mathsf Q}[Y]=\E[Yg'(S(X))]\) for all random variables \(Y\). In
general, any non-negative function \(Z\) (measure \(\mathsf Q\)) with
\(\E[Z]=1\) and \(\rho(X)=\E[XZ]\) (\(=\E_{\mathsf Q}[X]\)) is called a
contact function (subgradient) for \(\rho\) at \(X\), see Shapiro,
Dentcheva, and Ruszczyński (\protect\hyperlink{ref-Shapiro2009}{2009}).
Thus \(g'(S(X))\) is a contact function for \(\rho\) at \(X\). The name
subgradient comes from the fact that
\(\rho(X+Y)\ge \E_{\mathsf Q}[X+Y] = \rho(X) + \E_{\mathsf Q}[Y]\), by
\cref{thm:coherent}. The set of subgradients is called the
subdifferential of \(\rho\) at \(X\). If there is a unique subgradient
then \(\rho\) is differentiable. Delbaen
(\protect\hyperlink{ref-Delbaen2000}{2000}) Theorem 17 shows that
subgradients are contact functions.

We can interpret \(g'(S(X))\) as a state price density specific to the
\(X\), suggesting that \(\E[X_ig'(S(X))]\) gives the value of the cash
flows to policy \(i\). This motivates the following definition.

\medskip

\begin{definition} For $X=\sum_i X_i$ with $\mathsf Q\in\mathcal Q$ so that $\rho(X)=\E_{\mathsf Q}[X]$, the {\bf natural allocation premium} to policy $X_j$ as part of the portfolio $X$ is $\E_{\mathsf Q}[X_j]$. It is denoted $\rho_X(X_j)$.
\end{definition}

The natural allocation premium is a standard approach, appearing in
Delbaen (\protect\hyperlink{ref-Delbaen2000}{2000}), Venter, Major, and
Kreps (\protect\hyperlink{ref-Venter2006}{2006}) and Tsanakas and
Barnett (\protect\hyperlink{ref-Tsanakas2003a}{2003}) for example. It
has many desirable properties. Delbaen shows it is a fair allocation in
the sense of fuzzy games and that it has a directional derivative,
marginal interpretation when \(\rho\) is differentiable. It is
consistent with Jouini and Kallal
(\protect\hyperlink{ref-Jouini2001}{2001}) and Campi, Jouini, and Porte
(\protect\hyperlink{ref-Campi2013}{2013}), which show the rational price
of \(X\) in a market with frictions must be computed by state prices
that are anti-comonotonic \(X\). In our application the signs are
reversed: \(g'(S(X))\) and \(X\) are comonotonic.

The choice \(g'(S(X))\) is economically meaningful because it weights
the largest outcomes of \(X\) the most, which is appropriate from a
social, regulatory and investor perspective. It is also the only choice
of weights that works for all levels of assets. Since investors stand
ready to write any layer at the price determined by \(g\), their
solution must work for all \(a\).

However, there are two technical issues with the proposed natural
allocation. First, unlike prior works, we are allocating the premium for
\(X\wedge a\), not \(X\), a problem also considered in Major
(\protect\hyperlink{ref-Major2018}{2018}). And second, \(\mathsf Q\) may
not be unique. In general, uniqueness fails at capped variables like
\(X\wedge a\). Both issues are surmountable for a DRM, resulting in a
unique, well defined natural allocation. For a non-comonotonic additive
risk measure this is not the case.

It is helpful to define the premium, risk adjusted, analog of the
\(\alpha_i\) as \begin{equation} \label{eq:beta-def}
\beta_i(a) := \E_{\mathsf Q}[(X_i/X) \mid X > a].
\end{equation} \(\beta_i(x)\) is the value of the recoveries paid to
line \(i\) by a policy paying 1 in states \(\{ X>a \}\), i.e.~an
allocation of the premium for \(1_{X>a}\). By the properties of
conditional expectations, we have \begin{equation}
\label{eq:beta-cond}
\beta_i(a) = \frac{\E[(X_i/X) Z\mid X > a]}{\E[Z\mid X > a]}.
\end{equation} The denominator equals \(\mathsf Q(X>a)/\mathsf P(X>a)\).
Remember that while \(\E_{\mathsf Q}[X]=\E[XZ]\), for conditional
expectations
\(\E_{\mathsf Q}[X\mid \mathcal F]=\E[XZ\mid \mathcal F]/\E[Z\mid \mathcal F]\),
see Föllmer and Schied (\protect\hyperlink{ref-Follmer2011}{2011}),
Proposition A.12.

To compute \(\alpha_i\) and \(\beta_i\) we use a third function,
\begin{equation}\label{eq:kappa-def}
\kappa_i(x):= \E[X_i \mid X=x],
\end{equation} the conditional expectation of loss by policy, given the
total loss. It is an important fact that the risk adjusted version of
\(\kappa\) is unchanged because DRMs are law invariant. With these
preliminaries we can state the main theorem of this section.

\medskip

\begin{theorem} \label{main-theorem} Let $\mathsf Q\in \mathcal Q$ be the measure with Radon-Nikodym derivative $Z=g'(S_X(X))$.
\begin{enumerate}
\item $\E[X_i \mid X=x]=\E_{\mathsf Q}[X_i \mid X=x]$.
\item $\beta_i$ can be computed from $\kappa_i$ as
\begin{equation}
\beta_i(a)= \frac{1}{\mathsf Q(X>a)}\int_a^\infty \dfrac{\kappa_i(x)}{x} g'(S(x))f(x)\, dx. \label{eq:beta-easy}
\end{equation}
\item The natural allocation premium for policy $i$ under equal priority when total losses are supported by assets $a$, $\bar P_i(a):=\rho_{X\wedge a}(X_i(a))$, is given by
\begin{align}
\bar P_i(a) &=
\E_{\mathsf Q}[X_i \mid {X\le a}](1-g(S(a))) + a\E_{\mathsf Q}[X_i/X  \mid {X > a}]g(S(a)) \label{eq:pibar-main} \\
&=\E[X_iZ\mid X\le a](1-S(a)) + a\E[(X_i/X)Z\mid X>a]S(a).
\end{align}
\item The policy $i$ premium density is
\begin{equation}
P_i(a)=\beta_i(a)g(S(a)).
\label{eq:beta-gS}
\end{equation}
\end{enumerate}
\end{theorem}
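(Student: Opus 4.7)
The plan hinges on two elementary observations that together reduce everything to tower-property bookkeeping. First, the stipulated density $Z=g'(S(X))$ is a function of $X$ alone, so $Z$ is constant on $\{X=x\}$ and the ratio formula $\E_{\mathsf Q}[Y\mid\mathcal F]=\E[YZ\mid\mathcal F]/\E[Z\mid\mathcal F]$ (Föllmer–Schied, Proposition A.12) collapses when $\mathcal F\supseteq\sigma(X)$. Second, the substitution $u=S(x)$ yields
\begin{equation*}
\mathsf Q(X>a)=\int_a^\infty g'(S(x))f(x)\,dx=\int_0^{S(a)} g'(u)\,du=g(S(a)),
\end{equation*}
which is the identity that connects the abstract expectations under $\mathsf Q$ to the concrete premium density $g(S(a))$. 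As a sanity check that this $\mathsf Q$ genuinely prices $X\wedge a$, write $X\wedge a=\int_0^a 1_{\{X>x\}}\,dx$, take $\E_{\mathsf Q}$ inside, and use $\mathsf Q(X>x)=g(S(x))$ to recover $\rho(X\wedge a)=\int_0^a g(S(x))\,dx$ from \cref{eq:prem-def}.

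For part 1, since $Z$ is $\sigma(X)$-measurable it equals $g'(S(x))$ on $\{X=x\}$, so $\E_{\mathsf Q}[X_i\mid X=x]=g'(S(x))\E[X_i\mid X=x]/g'(S(x))=\kappa_i(x)$. This is the law-invariance fact: the conditional loss distribution given $X$ is identical under $\mathsf P$ and $\mathsf Q$. For part 2, apply the ratio formula to \cref{eq:beta-def} to get $\beta_i(a)=\E[(X_i/X)Z\,1_{\{X>a\}}]/\mathsf Q(X>a)$. Condition the numerator on $X$ and pull out $g'(S(X))/X$, which is $\sigma(X)$-measurable, leaving $\int_a^\infty (\kappa_i(x)/x)\,g'(S(x))f(x)\,dx$; dividing by $\mathsf Q(X>a)=g(S(a))$ gives \cref{eq:beta-easy}.

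For part 3, decompose $X_i(a)=X_i 1_{\{X\le a\}}+(aX_i/X)1_{\{X>a\}}$ and take $\E_{\mathsf Q}$ of each piece, writing each as a conditional expectation times the corresponding $\mathsf Q$-probability. Using $\mathsf Q(X\le a)=1-g(S(a))$ and $\mathsf Q(X>a)=g(S(a))$ immediately yields the first display in \cref{eq:pibar-main}. The second display follows by re-expanding $\E_{\mathsf Q}[\,\cdot\mid\mathcal F]$ via the ratio formula in reverse, moving the $Z$ back inside the $\mathsf P$-expectations.

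Finally, for part 4 I would combine parts 1 and 3 to write
\begin{equation*}
\bar P_i(a)=\int_0^a \kappa_i(x)g'(S(x))f(x)\,dx + a\int_a^\infty \frac{\kappa_i(x)}{x}g'(S(x))f(x)\,dx
\end{equation*}
and differentiate: the Leibniz term from the first integral is $\kappa_i(a)g'(S(a))f(a)$, and the product-rule term from the second integral contributes an equal and opposite $-\kappa_i(a)g'(S(a))f(a)$, leaving exactly $\int_a^\infty (\kappa_i(x)/x)g'(S(x))f(x)\,dx=\beta_i(a)g(S(a))$ by part 2. The only real obstacle is conceptual rather than computational, namely recognising that part 1 is what makes the whole scheme work: because $Z$ is a function of $X$, the unknown risk-adjusted conditional expectations reduce to the objective $\kappa_i(x)$, and every quantity in parts 2–4 becomes a one-dimensional integral against $f$ that can be read off directly from the joint law of $(X_i,X)$ and the distortion $g$.
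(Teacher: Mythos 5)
Your proof is correct and, for parts 1--3, follows essentially the paper's own route (ratio formula for conditional expectations under $\mathsf Q$, tower property against $\sigma(X)$, and the split of $X_i(a)$ over $\{X\le a\}$ and $\{X>a\}$). The genuine differences are in two places. First, the step you call a ``sanity check'' is in fact the load-bearing step of the whole theorem: since $\mathsf Q$ is generally not the unique maximizing measure for the capped variable $X\wedge a$, one must show that $Z=g'(S(X))$ is nevertheless a contact function for $X\wedge a$ before $\E_{\mathsf Q}[X_i(a)]$ can be identified with $\rho_{X\wedge a}(X_i(a))$. The paper proves this abstractly, by comonotonic additivity: $\rho(X)=\rho(X\wedge a)+\rho((X-a)^+)$ together with the subgradient inequalities $\E_{\mathsf Q}[X\wedge a]\le\rho(X\wedge a)$ and $\E_{\mathsf Q}[(X-a)^+]\le\rho((X-a)^+)$ forces equality in both. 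You instead verify it by direct computation, using the layer representation $X\wedge a=\int_0^a 1_{\{X>x\}}dx$ and $\mathsf Q(X>x)=g(S(x))$ to recover \cref{eq:prem-def}. Both work; the paper's argument is more robust (it needs no change of variables and no regularity of $g$ beyond what \cref{thm:coherent} supplies), while yours makes the contact-function property and the identity $\mathsf Q(X>x)=g(S(x))$ completely explicit --- you should just present it as the essential legitimacy step rather than a check, and note that the substitution $u=S(x)$ uses the absolute continuity of the concave $g$ away from its countably many kinks (the same caveat the paper invokes for \cref{eq:nat1}). Second, for part 4 you differentiate the ``outcome integral'' form $\int_0^a\kappa_i(x)g'(S(x))f(x)\,dx + a\beta_i(a)g(S(a))$ and let the boundary terms cancel, whereas the paper runs the equivalent computation in the opposite direction, integrating $\beta_i(x)g(S(x))$ by parts to recover $\bar P_i(a)$; your version is slightly cleaner since it lands directly on \cref{eq:beta-gS} via \cref{eq:beta-easy}, and the two are otherwise the same calculation.
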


The Theorem offers two contributions. First, it shows we can replace
\(\E_{\mathsf Q}[X_i \mid X]\) with \(\E[X_i \mid X]\), which enables
explicit calculation. There is no risk adjusted version of \(\kappa_i\).
Intuitively, a law invariant risk measure cannot change probabilities
within an event defined by \(X\): if it did then it would be
distinguishing between events on information other than \(S(X)\) whereas
law invariance says this is all that can matter. And second, it
identifies the premium density \cref{eq:beta-gS}, giving an allocation
of \cref{eq:prem-def} and a premium analog of \cref{eq:alpha-S}. It
provides a clear and illuminating way to visualize risk by collapsing a
multidimensional problem to one dimension, see \cref{fig:eg-2}. Part (4)
provides a direct premium analog of \cref{eq:eloss-main} and part (5) an
analog of \cref{eq:alpha-S}. \Cref{eq:pibar-main} is the same as
Tsanakas and Barnett (\protect\hyperlink{ref-Tsanakas2003a}{2003}) eq.
(19), although their derivation is in the context of a homogeneous
portfolio whereas our portfolio is static.

\noindent{\it Proof.\ \ } Part (1) follows in the same way as
\cref{eq:beta-cond}.

Since \(\rho\) is comonotonic additive
\(\rho(X)=\rho(X\wedge a) + \rho((X-a)^+)\) and hence
\(\rho(X)=\E_{\mathsf{Q}}[X\wedge a] + \E_{\mathsf{Q}}[(X-a)^+] \le \rho(X\wedge a) + \rho((X-a)^+)=\rho(X)\).
But since \(\E_{\mathsf{Q}}[X\wedge a] \le \rho(X\wedge a)\) and
\(\E_{\mathsf{Q}}[(X-a)^+] \le \rho((X-a)^+)\) it follows
\(\E_{\mathsf{Q}}[X\wedge a] = \rho(X\wedge a)\) and
\(\E_{\mathsf{Q}}[(X - a)^+] = \rho((X - a)^+)\). Therefore the contact
functions for \(X\) and \(X\wedge a\) are the same and it is legitimate
to assume \(Z=g'(S(X))\) when allocating premium for \(X\wedge a\).

To prove Part (2), note that by \cref{eq:beta-cond}
\(\beta_i(a)g(S(a))=\E_{\mathsf{Q}}[(X_i/X) 1_{X>a}]\). Conditioning on
\(X\), using the tower property, and taking out the known functions of
\(X\) on the right, shows \begin{align*}
\beta_i(a)g(S(a))
&= \E[\E[(X_i/X) g'(S(X)) 1_{X>a}\mid X]] \\
&= \E[(\E[(X_i\mid X]/X) g'(S(X)) 1_{X>a} ]] \\
&= \int_a^\infty \dfrac{\E[X_i \mid X=x]}{x} g'(S(x))f(x)\,dx.
\end{align*}

It follows from the definition of \(X_i(a)\), \cref{eq:equal-priority},
and the fact \(Z\) is a contact function for \(X\wedge a\) that
\begin{align*}
\bar P_i(a)
&= \E[X_i(a) g'(S(X))]  \\
&= \E[X_i g'(S(X)) 1_{X\le a}] +  \E[a(X_i/X) g'(S(X)) 1_{X > a}]  \\
&= \E[X_i g'(S(X)) \mid {X\le a}](1-S(a)) +  \\
&\qquad\qquad a\E[(X_i/X) g'(S(X)) \mid {X > a}]S(a) \\
&= \E_{\mathsf Q}[X_i \mid {X\le a}](1-g(S(a))) +  \\
&\qquad\qquad a\E_{\mathsf Q}[(X_i/X) \mid {X > a}]g(S(a)),
\end{align*} giving Part (3).

Rearranging \cref{eq:beta-easy} and differentiating gives
\begin{equation*}%\label{eq:beta-prime}
\frac{d}{da}(\beta_i(a)g(S(a))) = -\dfrac{\E[X_i \mid X=a]}{a} g'(S(a))f(a).
\end{equation*} Now use integration by parts to compute \begin{align*}
\int_0^a \beta_i(x)g(S(x))\,dx
&= x\beta_i(x)g(S(x))\Big\vert_0^a + \int_0^a x\,\dfrac{\E[X_i \mid X=x]}{x}  g'(S(x))f(x) \,dx\\
&= a\beta_i(a)g(S(a)) + \E_{\mathsf{Q}}[X_i \mid X\le a](1-g(S(a)) \\
&=  \bar P_i(a)
\end{align*} by \cref{eq:pibar-main}. As a result, the policy \(i\)
premium density in the asset layer at \(a\), i.e.~the derivative of
\(\bar P_{i}(a)\) with respect to \(a\), is
\(P_{i}(a) =\beta_i(a) g(S(a))\), giving Part (4).

\qed

The proof writes the price of a limited liability cover as the price of
default-free protection minus the value of the default put. This is the
standard starting point for allocation in a perfect competitive market
taken by Phillips, Cummins, and Allen
(\protect\hyperlink{ref-Phillips1998}{1998}), Myers and Read Jr.
(\protect\hyperlink{ref-Myers2001}{2001}), Sherris
(\protect\hyperlink{ref-Sherris2006a}{2006}), and Ibragimov, Jaffee, and
Walden (\protect\hyperlink{ref-Ibragimov2010}{2010}). They then allocate
the default put rather than the value of insurance payments directly.

The problem that can occur when \(\mathsf Q\) is not unique, but that
can be circumvented when \(\rho\) is a DRM, can be illustrated as
follows. Suppose \(\rho\) is given by \(p\)-TVaR. The measure
\(\mathsf{Q}\) weights the worst \(1-p\) proportion of outcomes of \(X\)
by a factor of \((1-p)^{-1}\) and ignores the others. Suppose \(a\) is
chosen as \(p'\)-VaR for a lower threshold \(p'<p\). Let
\(X_a=X\wedge a\) be capped insured losses and \(C=\{X_a=a\}\). By
definition \(\Pr(C)\ge 1-p'>1-p\). Pick any \(A\subset C\) of measure
\(1-p\) so that \(\rho(X)=\E[X\mid A]\). Let \(\psi\) be a measure
preserving transformation of \(\Omega\) that acts non-trivially on \(C\)
but trivially off \(C\). Then \(\mathsf{Q}'=\mathsf Q\psi\) will satisfy
\(\E_{\mathsf{Q}'}[X_a]=\E_{\mathsf{Q}}[X_a\psi^{-1}]=\rho(X_a)\) but in
general \(\E_{\mathsf{Q}'}[X]<\rho(X)\). The natural allocation with
respect to \(\mathsf{Q}'\) will be different from that for
\(\mathsf{Q}\). The theorem isolates a specific \(\mathsf Q\) to obtain
a unique answer. The same idea applies to \(\mathsf Q\) from other,
non-TVaR, \(\rho\): you can always shuffle part of the contact function
within \(C\) to generate non-unique allocations. See \cref{example-1}
for an example.

To recap: the premium formulas \cref{eq:pibar-main,eq:beta-gS} have been
derived assuming capital is provided at a cost \(g\) and there is equal
priority by line. They are computationally tractable and require no
other assumptions. There is no need to assume the \(X_i\) are
independent. They produce an entirely general, canonical determination
of premium in the presence of shared costly capital. This result extends
Grundl and Schmeiser (\protect\hyperlink{ref-Grundl2007}{2007}), who
pointed out that with an additive pricing functional there is no need to
allocate capital in order to price, to the situation of a non-additive
DRM pricing functional.

The key formulas we have derived are summarized in
\cref{tab:natural-allocation-summary}.

\begin{sidewaystable}
\begin{tikzpicture}[
  auto,
  transform shape,
    table/.style={matrix of nodes,
      row sep=3pt,    % distance between rows
      column sep=3pt, % distance between columns
      nodes in empty cells,
      nodes={draw=white, rectangle, scale=0.85, text badly ragged},  % white to yellow to see the boxes!; scale makes page 30 cm wide = (8.5-2) x 2.54 = 16.51; 16.51/30 = 0.5503333
      % column below = center the boxes, below right aligns them to the left (the box is below right the mark)
      % then align within the box controls text alignment, only needed if the box is bigger than the text
      column 1/.style={below right, , align=left, text width=3cm,  nodes={text=black, font=\bfseries}},       % quantity
      column 2/.style={below right, align=left, text width=10cm},                                             % loss
      column 3/.style={below right, align=left, text width=13cm},                                             % premium
      row 1/.style={nodes={text=black, font=\bfseries}, minimum height=25pt},  % title row
      % every even row/.style={nodes={fill=gray!10} }
    }]

\matrix (natural) [table, ampersand replacement=\&]{
Quantity                      \& Loss                                                             \& Premium                                                                     \\
Cash flow                     \& $X_i(a)=X_i\dfrac{X\wedge a}{X}$                                 \& N/a                                                                         \\
                              \&                                                                  \&                                                                             \\
Measure                       \& Objective, $S(x)$, $f(x)$                                        \& Risk adjusted, $\mathsf Q$, $g(S(x))$, $g'(S(x))f(x)$                             \\
                              \&                                                                  \&                                                                             \\
Expectation                   \& $\bar S_i(a)=\E[X_i(a)]$                                         \& $\bar P_i(a)=\E_{\mathsf Q}[X_i(a)]=\E[X_i(a)g'(S(X))]$                           \\
                              \&                                                                  \&                                                                             \\
Conditioning expectation      \& $\E[X_i\mid X\le a]F(a) + a\E[X_i/X\mid X >a]S(a)$               \& $\E_{\mathsf Q}[X_i\mid X\le a](1-g(S(a))) + a\E_{\mathsf Q}[X_i/X\mid X >a]g(S(a))$    \\
                              \&                                                                  \&                                                                             \\
Share function                \& $\alpha_i(x) =\E[X_i/X\mid X>x]$                                 \& $\beta_i(x) =\E_{\mathsf Q}[X_i/X\mid X>x]$                       \\
                              \&                                                                  \&                                                                             \\
Derivative of share function  \& $(\alpha_i S)'(x)=-\E[X_i\mid X=x]f(x)/x=-\kappa_i(x)f(x) / x$   \& $(\beta_i g(S))'(x)=-\E[X_i\mid X=x]g'(S(x))f(x)/x=-\kappa_i(x)g'(S(x))f(x) / x$  \\
                              \&                                                                  \&                                                                             \\
Lee integral expectation      \& $\dint_0^a \alpha_i(x)S(x)\,dx$                                  \& $\dint_0^a \beta_i(x)g(S(x))\,dx$                                           \\
                              \&                                                                  \&                                                                             \\
Outcome integral expectation  \& $\dint_0^a \kappa_i(x) f(x)\,dx + a\alpha_i(a)S(a)$              \& $\dint_0^a \kappa_i(x)g'(S(x))f(x)\,dx + a\beta_i(a)g(S(a))$            \\
                              \&                                                                  \&                                                                             \\
Scenario integral expectation \& $\dint_0^{F(a)} \kappa_i(q(p))\,dp + a\alpha_i(a)S(a)$           \& $\dint_0^{1-g(S(a))} \kappa_i(q(1-g^{-1}(1-p)))\,dp + a\beta_i(a)g(S(a))$     \\
                              \&                                                                  \&                                                                              \\
};
%% \begin{align*} A \&= 1 \\ B \&= 2 \end{align*}
% horizontal lines
\path[draw, thick]     (natural-1-1.north west)  -- (natural-1-3.north east);
\path[draw, very thin] (natural-1-1.south west)  -- (natural-1-3.south east);
\path[draw, very thin] (natural-3-1.south west)  -- (natural-3-3.south east);
\path[draw, very thin] (natural-5-1.south west)  -- (natural-5-3.south east);
\path[draw, very thin] (natural-7-1.south west)  -- (natural-7-3.south east);
\path[draw, very thin] (natural-9-1.south west)  -- (natural-9-3.south east);
\path[draw, very thin] (natural-11-1.south west) -- (natural-11-3.south east);
\path[draw, very thin] (natural-13-1.south west) -- (natural-13-3.south east);
\path[draw, very thin] (natural-15-1.south west) -- (natural-15-3.south east);
\path[draw, very thin] (natural-17-1.south west) -- (natural-17-3.south east);
\path[draw, thick]     (natural-19-1.south west) -- (natural-19-3.south east);

\end{tikzpicture}
\caption{Different ways of computing expected losses and the natural allocation. \label{tab:natural-allocation-summary} }
\end{sidewaystable}

\hypertarget{properties-of-alpha-beta-and-kappa}{%
\section{Properties of Alpha, Beta, and
Kappa}\label{properties-of-alpha-beta-and-kappa}}

In this section we explore properties of \(\alpha_i\), \(\beta_i\), and
\(\kappa_i\), see \cref{eq:alpha-def}, \cref{eq:beta-def}, and
\cref{eq:kappa-def}, and show how they interact to determine premiums by
line via the natural allocation.

For a measurable \(h\), \(\E[X_ih(X)]=\E[\kappa_i(X)h(X)]\) by the tower
property. This simple observation results in huge simplifications. In
general, \(\E[X_ih(X)]\) requires knowing the full bivariate
distribution of \(X_i\) and \(X\). Using \(\kappa_i\) reduces it to a
one dimensional problem. This is true even if the \(X_i\) are
correlated. The \(\kappa_i\) functions can be estimated from data using
regression and they provide an alternative way to model correlations.

Despite their central role, the \(\kappa_i\) functions are probably
unfamiliar so we begin by giving several examples to illustrate how they
behave. In general, they are non-linear and usually, but not always,
increasing.

\hypertarget{examples-of-kappa-functions}{%
\subsection{\texorpdfstring{Examples of \(\kappa\)
functions}{Examples of \textbackslash kappa functions}}\label{examples-of-kappa-functions}}

\begin{enumerate}
\def\labelenumi{\arabic{enumi}.}
\item
  If \(Y_i\) are independent and identically distributed and
  \(X_n=Y_1+\cdots +Y_n\) then \(\E[X_m\mid X_{m+n}=x]=mx/(m+n)\) for
  \(m\ge 1, n\ge 0\). This is obvious when \(m=1\) because the functions
  \(\E[Y_i\mid X_n]\) are independent across \(i=1,\ldots,n\) and sum to
  \(x\). The result follows because conditional expectations are linear.
  In this case \(\kappa_i(x)=mx/(m+n)\) is a line through the origin.
\item
  If \(X_i\) are multivariate normal then \(\kappa_i\) are straight
  lines, given by the usual least-squares fits \[
  \kappa_i(x)= \E[X_i] + \frac{\cov(X_i,X)}{\var(X)}(x-\E[X]).
  \] This example is familiar from the securities market line and the
  CAPM analysis of stock returns. If \(X_i\) are iid it reduces to the
  previous example because the slope is \(1/n\).
\item
  If \(X_i\), \(i=1,2\), are compound Poisson with the same severity
  distribution then \(\kappa_i\) are again lines through the origin.
  Suppose \(X_i\) has expected claim count \(\lambda_i\). Write the
  conditional expectation as an integral, expand the density of the
  compound Poisson by conditioning on the claim count, and then swap the
  sum and integral to see that
  \(\kappa_1(x)=\E[X_1\mid X_1 + X_2=x]=x\,\E[N(\lambda_1)/(N(\lambda_1)+N(\lambda_2))]\)
  where \(N(\lambda)\) are independent Poisson with mean \(\lambda\).
  This example generalizes the iid case. Further conditioning on a
  common mixing variable extends the result to mixed Poisson frequencies
  where each aggregate can have a separate or shared mixing
  distribution. The common severity is essential. The result means that
  if a line of business is defined to be a group of policies that shares
  the same severity distribution, then premiums for policies within the
  line will have rates proportional to their expected claim counts.
\item
  A theorem of Efron says that if \(X_i\) are independent and have
  log-concave densities then all \(\kappa_i\) are non-decreasing,
  Saumard and Wellner (\protect\hyperlink{ref-Saumard2014}{2014}). The
  multivariate normal example is a special case of Efron's theorem.
\end{enumerate}

Denuit and Dhaene (\protect\hyperlink{ref-Denuit2012}{2012}) define an
ex post risk sharing rule called the conditional mean risk allocation by
taking \(\kappa_i(x)\) to be the allocation to policy \(i\) when
\(X=x\). A series of recent papers, see Denuit and Robert
(\protect\hyperlink{ref-Denuit2020e}{2020}) and references therein,
considers the properties of the conditional mean risk allocation
focusing on its use in peer-to-peer insurance and the case when
\(\kappa_i(x)\) is linear in \(x\).

\hypertarget{the-behavior-of-alpha_i-beta_i}{%
\subsection{\texorpdfstring{The Behavior of \(\alpha_i\),
\(\beta_i\)}{The Behavior of \textbackslash alpha\_i, \textbackslash beta\_i}}\label{the-behavior-of-alpha_i-beta_i}}

By definition \(\alpha_i(x)\) is the expected proportion of losses from
policy \(i\) in \(1_{\{X>x\}}\) and \(\beta_i(x)\) is the risk adjusted
proportion. They are average proportions not proportions of the
averages:
\(\alpha_i(x) = \E[X_i /X \mid X> x]\not=\E[X_i\mid X> x]/\E[X\mid X>x]\)
because of Jensen's inequality applied to the convex function
\(x\mapsto 1/x\).

To better understand the shape of \(\alpha_i\) and \(\beta_i\) we can
compute their derivatives. Differentiating
\(\alpha_i(x)S(x)=\E[(X_i/X)1_{X>x}]\) and re-arranging gives
\begin{equation}  \label{eq:alpha-prime}
\alpha_i'(x) = \left( \alpha_i(x) - \frac{\kappa_i(x)}{x} \right)\frac{f(x)}{S(x)}.
\end{equation} The results for \(\beta_i\) are analogous. The function
\(h(x):=f(x)/S(x)\) is called the hazard rate. If \(X\) models a
lifetime, \(h\) is called the force of mortality. For thick right-skewed
distributions \(h\) is typically an eventually decreasing function. For
thin tailed distributions it is typically an eventually increasing
function. It is constant for the exponential distribution. The action of
\(g\) is to make the right tail thicker and so to decrease the hazard
rate. Since \(h(x)=-d/dx(\log(S(x)))\) it follows that \[
S(x) = \exp\left(-\int_t^\infty h(s)ds\right).
\] The integral is called the cumulative hazard function. From this
formulation it is clear the proportional hazard \(g(s)=s^r\),
\(0<r\le 1\), acts on the hazard function as multiplication by \(r\),
hence justifying its name.

\Cref{eq:alpha-prime} shows that \(\alpha_i'(x)=0\) if \(f(x)=0\) and
\(S(x)\) close to 1, which will occur in the extreme left tail when
\(X\) includes some level of near certain losses. Then \(\alpha_i\) will
be flat for small \(x\), while \(f(x)\approx 0\). Flat behavior can also
occur if \(\alpha_i(x)-\kappa_i(x)/x=0\), but that is an exceptional
circumstance.

For thick tailed insurance distributions \(h(x)\) is eventually
decreasing but remains strictly positive. If \(\kappa_i(x)/x\) is
decreasing then \(\alpha_i'(x)<0\) because \(\alpha_i(x)\) is the
probability weighted integral of \(\kappa_i(t)/t\) over \(t>x\), and so
\(\alpha_i(x)<\kappa_i(x)/x\). Conversely if \(\kappa_i(x)/x\) is
increasing \(\alpha'_i(x)>0\).

Since \(\sum_i \kappa_i(x)=x\) it follows that
\(\sum_i \kappa_i'(x)=1\). It is typical for the thickest tail
distribution, \(i\) say, to behave like
\(\kappa_i(x)\approx x -\sum_{j\not=i} \E[X_j]\) for large \(x\). Then
\(\kappa_i'(x)=1\) and the remaining \(\kappa_j(x)\approx \E[X_j]\) are
almost constant for large \(x\). In that case
\(\kappa_j(x)/x > \alpha_j(x)\) and so \(\alpha_j'(x)<0\) and
\(\alpha_i'(x)>0\). To have two policies with \(\alpha_i\) increasing
requires a very delicate balancing of the thickness of their tails with
\(\kappa_i(x)\), growing with order \(x\). A compound Poisson with the
same severity is an example.

\hypertarget{properties-of-the-natural-allocation}{%
\section{Properties of the Natural
Allocation}\label{properties-of-the-natural-allocation}}

\hypertarget{aggregate-properties}{%
\subsection{Aggregate Properties}\label{aggregate-properties}}

We now explore margin, equity, and return in total and by policy. We
begin by considering them in total.

By definition the average return with assets \(a\) is
\begin{equation}\label{eq:avg-roe}
\bar\iota(a) := \frac{\bar M(a)}{\bar Q(a)}
%;\quad \bar M(a)=\bar P(a)-\bar S(a);\quad \bar Q(a)=a-\bar P(a),
\end{equation} where margin \(\bar M\) and equity \(\bar Q\) are defined
in the paragraph following \cref{eq:prem-def}.

\Cref{eq:avg-roe} has important implications. It tells us the investor
priced expected return varies with the level of assets. For most
distortions return decreases with increasing capital. In contrast, the
standard RAROC models use a fixed average cost of capital, regardless of
the overall asset level, Tasche
(\protect\hyperlink{ref-Tasche1999}{1999}). CAPM or the Fama-French
three factor model are often used to estimate the average return, with a
typical range of 7 to 20 percent, Cummins and Phillips
(\protect\hyperlink{ref-Cummins2005}{2005}). A common question of
working actuaries performing capital allocation is about so-called
excess capital, if the balance sheet contains more capital than is
required by regulators, rating agencies, or managerial prudence. Our
model suggests that higher layers of capital are cheaper, but not free,
addressing this concern.

The varying returns in \cref{eq:avg-roe} may seem inconsistent with
Miller Modigliani. But that says the cost of funding a given amount of
capital is independent of how it is split between debt and equity; it
does not say the average cost is constant as the amount of capital
varies.

\hypertarget{no-undercut-and-positive-margin-for-independent-risks}{%
\subsection{No-Undercut and Positive Margin for Independent
Risks}\label{no-undercut-and-positive-margin-for-independent-risks}}

The natural allocation has two desirable properties. It is always less
than the stand-alone premium, meaning it satisfies the no-undercut
condition of Denault (\protect\hyperlink{ref-Denault2001}{2001}), and it
produces non-negative margins for independent risks.

\medskip

\begin{proposition}\label{prop:pos-margin}
Let $X=\sum_{i=1}^n X_i$, $X_i$ non-negative and independent, and let $g$ be a distortion. Then (1) the natural allocation is never greater than the stand-alone premium and (2) the natural allocation to every $X_i$ contains a non-negative margin.
\end{proposition}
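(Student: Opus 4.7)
Both parts rest on the identity $\bar P_i = \E_{\mathsf Q}[X_i] = \E[X_i Z]$ with $Z = g'(S(X))$, and exploit two different properties of $\mathsf Q$: for (1), that $Z$ is merely \emph{some} element of the representation class $\mathscr Z$ of \cref{thm:coherent}; for (2), that $Z$ is a particular monotone function of $X$.

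For (1) the plan is to invoke the supremum representation in \cref{thm:coherent} part (5), but applied to the single policy $X_i$ rather than to the aggregate. Since $Z=g'(S(X))$ lies in $\mathscr Z$ (it is the contact function that realizes the maximum at the aggregate $X$), it must be sub-optimal for any other risk, so
\[
\bar P_i = \E[X_i Z] \le \sup_{Z'\in\mathscr Z}\E[X_i Z'] = \rho(X_i),
\]
and $\rho(X_i)$ is, by definition, the stand-alone premium. This is a one-line argument that uses neither independence nor the concavity of $g$ beyond what is already baked into the theorem.

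For (2) the claim $\bar P_i \ge \E[X_i]$ is equivalent to $\cov(X_i,Z)\ge 0$, because $\E[Z] = \int_0^\infty g'(S(x))f(x)\,dx = \int_0^1 g'(u)\,du = g(1)-g(0) = 1$ by the substitution $u = S(x)$. I would establish the covariance inequality via the Harris/FKG association inequality for independent random variables, after checking two coordinatewise monotonicities: (i) $X_i$, viewed as a function of $(X_1,\ldots,X_n)$, is non-decreasing in every argument (trivially---it is constant in $X_j$, $j\neq i$); and (ii) $Z = g'(S(X_1+\cdots+X_n))$ is non-decreasing in every argument, because $g$ concave forces $g'$ non-increasing, $S$ is non-increasing, hence $g'\circ S$ is non-decreasing, while $X_1+\cdots+X_n$ is coordinatewise non-decreasing. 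Independence of the $X_j$ then yields $\E[X_i Z]\ge \E[X_i]\E[Z] = \E[X_i]$, as required.

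The principal point of care, rather than a real obstacle, is verifying the FKG/Harris hypotheses cleanly and being explicit that (2) concerns the default-free natural allocation to $X_i$, not the payment $X_i(a)$ under equal priority; the limited-liability case would introduce the factor $(X\wedge a)/X$, which is \emph{de}creasing in the aggregate and so breaks the coordinatewise monotonicity that the association inequality requires---consistent with the earlier remark that low-risk lines can receive negative margins under default.
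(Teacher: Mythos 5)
Your part (1) is exactly the paper's argument: the contact function $Z=g'(S(X))$ is one feasible element of the representing set, so $\E[X_iZ]$ is bounded by the supremum defining $\rho(X_i)$. For part (2) you take a genuinely different route. The paper writes $\tilde X_1=X_1+\E[X_2]$, $\tilde X_2=X_2-\E[X_2]$, invokes Rothschild--Stiglitz to see that adding an independent mean-zero risk is a mean-preserving spread, uses B\"auerle--M\"uller (DRMs respect second-order stochastic dominance) plus translation invariance to get the superadditivity bound $\rho(X_1+X_2)\ge\rho(X_1)+\E[X_2]$, and then needs part (1) to close the chain $\bar P_2\ge\rho(X_1)-\bar P_1+\E[X_2]\ge\E[X_2]$. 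You instead prove $\cov(X_i,Z)\ge 0$ directly via the Harris/FKG association inequality for product measures, after checking that both $X_i$ and $Z=g'\bigl(S(X_1+\cdots+X_n)\bigr)$ are coordinatewise non-decreasing functions of the independent inputs (concavity of $g$ makes $g'$ non-increasing, $S$ is non-increasing, so the composition is non-decreasing). Both proofs are correct and both use independence and concavity essentially. The paper's version yields the intermediate inequality $\rho(X_1+X_2)\ge\rho(X_1)+\E[X_2]$, which is of independent interest and motivates the ``adding a constant is the best case'' discussion that follows; it also argues at the level of the risk measure without committing to a specific subgradient. Your version is more self-contained, does not require part (1) as an input to part (2), handles general $n$ without the reduction to $n=2$, and pinpoints exactly where the hypotheses enter. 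The only caveats are ones the paper shares: $\E[Z]=1$ via the substitution $u=S(x)$ and the a.e.\ differentiability of $g$ require the usual care when $X$ has atoms or $g$ has kinks, and, as you correctly note, the argument is confined to the default-free allocation since the factor $(X\wedge a)/X$ would destroy the coordinatewise monotonicity.
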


\noindent{\it Proof.\ \ } It is enough to prove for \(n=2\) by
considering \(X_1\) and \(X_2' = X_2+\cdots +X_n\).

By \cref{thm:coherent} we know that \(\rho(X)=\E_{\mathsf{Q}}[X]\) where
\({\mathsf{Q}}\) has Radon-Nikodym derivative \(g'(S_{X}(X))\). By
definition, the natural allocation is \(\bar P_1=\E[X_1g'(S_X(X))]\).
Therefore, \[
\bar P_1 = \E[X_1g'(S_X(X))] \le \sup_{Q\in\mathcal Q}\E_Q[X] = \rho(X_1)
\] which shows Part (1), that the natural allocation is never greater
than the stand-alone premium.

Let \(\tilde X_1 = X_1 + \E[X_2]\) and \(\tilde X_2 = X_2 - \E[X_2]\).
Then by Rothschild and Stiglitz
(\protect\hyperlink{ref-Rothschild1970}{1970}) or Machina and Pratt
(\protect\hyperlink{ref-Machina1997}{1997})
\(\tilde X_1 + \tilde X_2 \succeq^2 \tilde X_1\), where \(\succeq^2\)
denotes second order stochastic dominance. Bäuerle and Müller
(\protect\hyperlink{ref-Bauerle2006}{2006}) shows that DRMs respect
second order stochastic dominance. Therefore \[
\rho(\tilde X_1 + \tilde X_2)\ge \rho(\tilde X_1).
\] By translation invariance \(\rho(\tilde X_1)=\rho(X_1) + \E[X_2]\).
Since \(\tilde X_1 + \tilde X_2 = X_1 + X_2\) we conclude \[
\rho(X_1 + X_2)\ge \rho(X_1) + \E[X_2].
\]

Combining these results we get \begin{align*}
\bar P_1 + \bar P_2 = \rho(X_1+X_2) &\ge \rho(X_1) + \E[X_2] \\
\implies \bar P_2 &\ge \rho(X_1)-\bar P_1 + \E[X_2] \\
&\ge \E[X_2]
\end{align*} as required for Part (2). \qed

Part (1) is well known. The proof if Part (2) leverages the fact
\(\rho\) is translation invariant. If we add \(X_2=c\) to \(X_1\) then
its natural allocation is \(c\). In a sense, this is the \emph{best
case}. Any non-constant independent variable, no matter how low its
variance, must slightly increase risk. It does not make sense to grant
the new variable a credit off expected loss, when we would not do so for
a constant. A credit is possible for dependent variables, however.

Since \(\bar P_i = \E[\kappa_i(X)g'(S(X))]\) we see the no-undercut
condition holds if \(\kappa_i(X)\) and \(g'(S(X))\) are comonotonic, and
hence if \(\kappa_i\) is increasing, or if \(\kappa_i(X)\) and \(X\) are
positively correlated (recall \(\E[g'(S(X))]=1\)). Since
\(\sum_i \kappa_i(x)=x\) at least one \(\kappa_i\), say
\(\kappa_{i^*}\), must be increasing. Policy \(i^*\) is the capacity
consuming line that will always have a positive margin. In this way
\(\kappa\) differentiates \emph{relative} tail thickness.

\hypertarget{policy-level-properties-varying-with-asset-level}{%
\subsection{Policy Level Properties, Varying with Asset
Level}\label{policy-level-properties-varying-with-asset-level}}

We start with a corollary of the results in
\cref{loss-and-premium-by-policy-and-layer} which gives a nicely
symmetric and computationally tractable expression for the natural
margin allocation in the case of finite assets.

\begin{corollary}
The margin density for line $i$ at asset level $a$ is given by
\begin{equation}\label{eq:coc-by-line}
M_i(a) =\beta_i(a)g(S(a)) -  \alpha_i(a)S(a).
\end{equation}
\end{corollary}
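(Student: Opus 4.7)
The plan is to obtain this identity as an immediate consequence of the density decompositions already established: the proposition in Section 3.3 gives the policy loss density $S_i(a) = \alpha_i(a)S(a)$, and Part (4) of the main theorem gives the policy premium density $P_i(a) = \beta_i(a)g(S(a))$. The only extra ingredient needed to glue these together is the line-level analog of the definition of margin.

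First, I would introduce $\bar M_i(a) := \bar P_i(a) - \bar S_i(a)$, paralleling the aggregate definition $\bar M(a) = \bar P(a) - \bar S(a)$ from the paragraph following \cref{eq:prem-def}. By linearity of differentiation in $a$, the margin density for line $i$ is $M_i(a) = d\bar M_i/da = P_i(a) - S_i(a)$. Substituting the premium density from Part (4) of the main theorem and the loss density from the proposition in Section 3.3 immediately yields
\[
M_i(a) = \beta_i(a)g(S(a)) - \alpha_i(a)S(a),
\]
as claimed.

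There is no substantive obstacle; the corollary is essentially bookkeeping. One small item to verify is that $\bar S_i$ and $\bar P_i$ are differentiable wherever one evaluates $M_i(a)$, but this is already covered: both are defined as integrals of the locally integrable densities $\alpha_i S$ and $\beta_i g(S)$, so the fundamental theorem of calculus supplies the derivatives almost everywhere. The interest of the formula lies not in the proof but in its symmetry: the margin density appears as the gap at the layer $a$ between the risk-adjusted proportional allocation $\beta_i(a)$ of the aggregate premium density $g(S(a))$ and the objective proportional allocation $\alpha_i(a)$ of the aggregate loss density $S(a)$, making transparent how both the shape of risk (through $g$) and the within-portfolio composition (through $\alpha_i,\beta_i$) jointly determine the per-line margin.
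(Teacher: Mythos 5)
Your proposal is correct and matches the paper's own argument: the paper likewise sets $\bar M_i(a)=\bar P_i(a)-\bar S_i(a)$, writes both terms via the integral representations \cref{eq:alpha-S,eq:beta-gS}, and differentiates to read off $M_i(a)=\beta_i(a)g(S(a))-\alpha_i(a)S(a)$. Your use of the densities $S_i$ and $P_i$ directly is the same bookkeeping step, so there is nothing further to add.
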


\noindent{\it Proof.\ \ } Using \cref{eq:alpha-S,eq:beta-gS} we can
compute margin \(\bar M_i(a)\) in \(\bar P_i(a)\) by line as
\begin{align}
\bar M_i(a)=& \bar P_i(a) - \bar L_i(a) \nonumber \\
% =& \int_0^a \E[X_i/X g'(S(X)) \mid X\ge t]g(S(x)) -  \E[X_i/X \mid X\ge t]S(x)dx \nonumber \\
=& \int_0^a \beta_i(x)g(S(x)) -  \alpha_i(x)S(x)\,dx.  \label{eq:margin-by-line}
\end{align} Differentiating we get the margin density for line \(i\) at
\(a\) expressed in terms of \(\alpha_i\) and \(\beta_i\) as shown. \qed

Margin in the current context is the cost of capital, thus
\cref{eq:coc-by-line} is an important result. It allows us to compute
economic value by line and to assess static portfolio performance by
line---one of the motivations for performing capital allocation in the
first place. In many ways it is also a good place to stop. Remember
these results only assume we are using a distortion risk measure and
have equal priority in default. We are in a static model, so questions
of portfolio homogeneity are irrelevant. We are not assuming \(X_i\) are
independent.

What does \cref{eq:coc-by-line} say about by margins by line? Since
\(g\) is increasing and concave \(P(a)=g(S(a))\ge S(a)\) for all
\(a\ge 0\). Thus all asset layers contain a non-negative total margin
density. It is a different situation by line, where we can see \[
M_i(a) \ge 0 \iff
\beta_i(a)g(S(a)) - \alpha_i(a)S(a)\ge 0 \iff
\frac{\beta_i(a)}{\alpha_i(a)} \ge \frac{S(a)}{g(S(a))}.
\] The line layer margin density is positive when \(\beta_i/\alpha_i\)
is greater than the all-lines layer loss ratio. Since the loss ratio is
\(\le 1\) there must be a positive layer margin density whenever
\(\beta_i(a)/\alpha_i(a) > 1\). But when \(\beta_i(a)/\alpha_i(a) < 1\)
it is possible the line has a negative margin density. How can that
occur and why does it make sense? To explore this we look at the shape
of \(\alpha\) and \(\beta\) in more detail.

It is important to remember why \cref{prop:pos-margin} does not apply:
it assumes unlimited cover, whereas here \(a<\infty\). With finite
capital there are potential transfers between lines caused by their
behavior in default that overwhelm the positive margin implied by the
proposition. Also note the proposition cannot be applied to
\(X\wedge a=\sum_i X_i(a)\) because the line payments are no longer
independent.

In general we can make two predictions about margins.

\textbf{Prediction 1}: Lines where \(\alpha_i(x)\) or \(\kappa_i(x)/x\)
increase with \(x\) will have always have a positive margin.

\textbf{Prediction 2}: A log-concave (thin tailed) line aggregated with
a non-log-concave (thick tailed) line can have a negative margin,
especially for lower asset layers.

Prediction 1 follows because the risk adjustment puts more weight on
\(X_i/X\) for larger \(X\) and so
\(\beta_i(x)/\alpha_i(x)> 1 > S(x) / g(S(x))\). Recall the risk
adjustment is comonotonic with total losses \(X\).

A thin tailed line aggregated with thick tailed lines will have
\(\alpha_i(x)\) decreasing with \(x\). Now the risk adjustment will
produce \(\beta_i(x)<\alpha_i(x)\) and it is possible that
\(\beta_i(x)/\alpha_i(x)<S(x)/g(S(x))\). In most cases, \(\alpha_i(x)\)
approaches \(\E[X_i]/x\) and \(\beta_i(x)/\alpha_i(x)\) increases with
\(x\), while the layer loss ratio decreases---and margin increases---and
the thin line will eventually get a positive margin. Whether or not the
thin line has a positive total margin \(\bar M_i(a)>0\) depends on the
particulars of the lines and the level of assets \(a\). A negative
margin is more likely for less well capitalized insurers, which makes
sense because default states are more material and they have a lower
overall dollar cost of capital. In the independent case, as
\(a\to\infty\) \cref{prop:pos-margin} eventually guarantees positive
margins for all lines.

These results are reasonable. Under limited liability, if assets and
liabilities are pooled then the thick tailed line benefits from pooling
with the thin one because pooling increases the assets available to pay
losses when needed. Equal priority transfers wealth from thin to thick
in states of the world where thick has a bad event, c.f., the example in
\cref{the-equal-priority-default-rule}. But because thick dominates the
total, the total losses are bad when thick is bad. The negative margin
compensates the thin-tailed line for transfers.

Another interesting situation occurs for asset levels within attritional
loss layers. Most realistic insured loss portfolios are quite skewed and
never experience very low loss ratios. For low loss layers, \(S(x)\) is
close to 1 and the layer at \(x\) is funded almost entirely by expected
losses; the margin and equity density components are nearly zero. Since
the sum of margin densities over component lines equals the total margin
density, when the total is zero it necessarily follows that either all
line margins are also zero or that some are positive and some are
negative. For the reasons noted above, thin tailed lines get the
negative margin as thick tailed lines compensate them for the improved
cover the thick tail lines obtain by pooling.

In conclusion, the natural margin by line reflects the relative
consumption of assets by layer, Mango
(\protect\hyperlink{ref-Mango2005a}{2005}). Low layers are less
ambiguous to the provider and have a lower margin relative to expected
loss. Higher layers are more ambiguous and have lower loss ratios. High
risk lines consume more higher layer assets and hence have a lower loss
ratio. For independent lines with no default the margin is always
positive. But there is a confounding effect when default is possible.
Because more volatile lines are more likely to cause default, there is a
wealth transfer to them. The natural premium allocation compensates low
risk policies for this transfer, which can result in negative margins in
some cases.

\hypertarget{equity-allocation-by-policy}{%
\section{Equity Allocation by
Policy}\label{equity-allocation-by-policy}}

Although \cref{eq:coc-by-line} determines margin by line, we cannot
compute return by line, or allocate frictional costs of capital, because
we still lack an equity allocation, a problem we now address.

\hypertarget{the-natural-allocation-of-equity}{%
\subsection{The Natural Allocation of
Equity}\label{the-natural-allocation-of-equity}}

\begin{definition}
The {\bf natural allocation of equity} to line $i$ is given by
\begin{equation}
Q_i(a) = \frac{\beta_i(a)g(S(a)) -  \alpha_i(x)S(a)}{g(S(a))- S(a)} \times (1-g(S(a))). \label{eq:main-alloc}
\end{equation}
\end{definition}

Why is this allocation natural? In total the layer return at \(a\) is \[
\iota(a) := \frac{M(a)}{Q(a)} = \frac{P(a) - S(a)}{1-P(a)} = \frac{g(S(a)) - S(a)}{1- g(S(a))}.
\] We claim that for a law invariant pricing measure the layer return
\emph{must be the same for all lines}. Law invariance implies the risk
measure is only concerned with the attachment probability of the layer
at \(a\), and not with the cause of loss within the layer. If return
\emph{within a layer} varied by line then the risk measure could not be
law invariant.

We can now compute capital by layer by line, by solving for the unknown
equity density \(Q_i(a)\) via \[
\iota(a) = \frac{M(a)}{Q(a)} = \frac{M_i(a)}{Q_i(a)}\implies Q_i(a) = \frac{M_i(a)}{\iota(a)}.
\] Substituting for layer return and line margin gives
\cref{eq:main-alloc}.

Since \(1-g(S(a))\) is the proportion of capital in the layer at \(a\),
\cref{eq:main-alloc} says the allocation to line \(i\) is given by the
nicely symmetric expression \begin{equation}\label{eq:q-formula}
\frac{\beta_i(a)g(S(a)) -  \alpha_i(x)S(a)}{g(S(a))- S(a)}.
\end{equation} To determine total capital by line we integrate the
equity density \[
\bar Q_i(a) := \int_0^a Q_i(x) dx.
\] And finally we can determine the average return to line \(i\) at
asset level \(a\) \begin{equation}\label{eq:avg-roe-by-line}
\bar\iota_i(a) = \frac{\bar M_i(a)}{\bar Q_i(a)}.
\end{equation} The average return will generally vary by line and by
asset level \(a\). Although the return within each layer is the same for
all lines, the margin, the proportion of capital, and the proportion
attributable to each line all vary by \(a\). Therefore average returns
will vary by line and \(a\). This is in stark contrast to the standard
industry approach, which uses the same return for each line and
implicitly all \(a\). How these quantities vary by line is complicated.
Academic approaches emphasized the possibility that returns vary by
line, but struggled with parameterization, Myers and Cohn
(\protect\hyperlink{ref-Myers1987}{1987}).

\Cref{eq:avg-roe-by-line} shows the average return by line is an
\(M_i\)-weighted harmonic mean of the layer returns given by the
distortion \(g\), viz \[
\frac{1}{\bar\iota_i(a)} = \int_0^a \frac{1}{\iota(x)}\frac{M_i(x)}{\bar M_i(a)}\,dx.
\] The harmonic mean solves the problem that the return for lower layers
of assets is potentially infinite (when \(g'(1)=0\)). The infinities do
not matter: at lower asset layers there is little or no equity and the
layer is fully funded by the loss component of premium. When so funded,
there is no margin and so the infinite return gets zero weight. In this
instance, the sense of the problem dictates that \(0\times\infty=0\):
with no initial capital there is no final capital regardless of the
return.

\hypertarget{intermediated-pricing}{%
\subsection{Intermediated Pricing}\label{intermediated-pricing}}

An equity allocation to policy is needed to price intermediated
insurance because of the frictional costs of holding capital in an
insurance company.

The price of investor-written insurance is \(\rho(X)\). A cat bond
transaction is an example of investor-written insurance. Following Myers
and Read Jr. (\protect\hyperlink{ref-Myers2001}{2001}) and Ibragimov,
Jaffee, and Walden (\protect\hyperlink{ref-Ibragimov2010}{2010}) we
model frictional costs as a tax on equity at rate \(\delta\). The
density and limited price of intermediated insurance becomes
\begin{gather}
P_i^I(a) = P_i(a) + \delta Q_i(a)  \\
\bar P_i^I(a) = \bar P_i(a) + \delta \bar Q_i(a). \\
\end{gather}

The relative size of \(M_i\) and \(\delta Q_i\) is a topic for future
research.

\hypertarget{examples}{%
\section{Examples}\label{examples}}

\hypertarget{example-1-a-simple-discrete-example}{%
\subsection{\texorpdfstring{Example 1: A Simple Discrete Example
\label{example-1}}{Example 1: A Simple Discrete Example }}\label{example-1-a-simple-discrete-example}}

Consider a two line example where \(X_1\) takes values 0, 9 and 10, and
\(X_2\) values 0, 1 and 90, the lines are independent and \(X=X_1+X_2\).
Suppose the outcome probabilities are \(1/2, 1/4\), and \(1/4\)
respectively for each outcome and consider a risk measure given by the
proportional hazard transform \(g(s)=\sqrt{s}\). There are nine possible
outcomes shown in \cref{tab:eg1}. The natural allocation appears to
depend on the ordering of the two outcomes where \(X=10\). If these two
rows are swapped the allocations are different, as shown in the last two
rows of the table.

\scriptsize

\begin{longtable}[]{@{}lrrrrrrrrrr@{}}
\caption{Nine possible outcomes showing ambiguous ordering for \(X=10\).
The natural allocation \(\E_{\mathsf{Q}}[X_i]\) appears to depend on the
ordering of outcomes 4 and 5. The next to last row shows
\(\E_{\mathsf{Q}}\) with these rows swapped.
\label{tab:eg1}}\tabularnewline
\toprule
Outcome & \(X_1\) & \(X_2\) & \(X\) & \(\mathsf P\) & \(S(x)\) &
\(g(S)\) & \(\mathsf{Q}\) & \(Z\) & \(\E[Z\mid X]\) &
\(\tilde{\mathsf{Q}}\)\tabularnewline
\midrule
\endfirsthead
\toprule
Outcome & \(X_1\) & \(X_2\) & \(X\) & \(\mathsf P\) & \(S(x)\) &
\(g(S)\) & \(\mathsf{Q}\) & \(Z\) & \(\E[Z\mid X]\) &
\(\tilde{\mathsf{Q}}\)\tabularnewline
\midrule
\endhead
1 & 0 & 0 & 0 & 4/16 & 12/16 & 0.8660254 & 0.1339746 & 0.5358984 &
0.5358984 & 0.1339746\tabularnewline
2 & 0 & 1 & 1 & 2/16 & 10/16 & 0.7905694 & 0.07545599 & 0.6036479 &
0.6036479 & 0.07545599\tabularnewline
3 & 9 & 0 & 9 & 2/16 & 8/16 & 0.7071068 & 0.08346263 & 0.6677011 &
0.6677011 & 0.08346263\tabularnewline
4 & 9 & 1 & 10 & 1/16 & 7/16 & 0.6614378 & 0.04566895 & 0.7307033 &
0.7898122 & 0.04936326\tabularnewline
5 & 10 & 0 & 10 & 2/16 & 5/16 & 0.5590170 & 0.1024208 & 0.8193667 &
0.7898122 & 0.09872652\tabularnewline
6 & 10 & 1 & 11 & 1/16 & 4/16 & 0.5 & 0.05901699 & 0.9442719 & 0.9442719
& 0.05901699\tabularnewline
7 & 0 & 90 & 90 & 2/16 & 2/16 & 0.3535534 & 0.1464466 & 1.171573 &
1.171573 & 0.1464466\tabularnewline
8 & 9 & 90 & 99 & 1/16 & 1/16 & 0.25 & 0.1035534 & 1.656854 & 1.656854 &
0.1035534\tabularnewline
9 & 10 & 90 & 100 & 1/16 & 0 & 0 & 0.25 & 4 & 4 & 0.25\tabularnewline
\(\E\) & 4.750000 & 22.75000 & 27.5 & & & & & & &\tabularnewline
\(\E_{\mathsf{Q}}\) & 6.208543 & 45.18014 & 51.38869 & & & & & &
&\tabularnewline
Swap 4, 5 & 6.200857 & 45.18783 & 51.38869 & & & & & & &\tabularnewline
Average & 6.2046998 & 45.183985 & 0 & & & & & & &\tabularnewline
\bottomrule
\end{longtable}

\normalsize

The last three columns of the table compute the measure
\(\tilde{\mathsf{Q}}\) corresponding to the unique \(\mathsf Q\mid X\).
The \(\tilde{\mathsf{Q}}\)-expected values of \(X_1\) and \(X_2\) are
6.2048488 and 45.183836, respectively. Note these values for the natural
allocation are different from the average of the the two orderings of
rows 4 and 5.

\Cref{tab:eg2} replaces \(X_i\) with \(\kappa_i(x)=\E[X_i\mid X=x]\),
resulting in one row per value of \(X,\) and uses \cref{main-theorem} to
compute expectations. The results are the same as using
\(\tilde{\mathsf{Q}}\).

\scriptsize

\begin{longtable}[]{@{}lrrrrrrr@{}}
\caption{Combining outcomes 4 and 5 and working with \(\E[X_i\mid X]\)
resolves the ambiguity and produces the natural allocation.
\label{tab:eg2}}\tabularnewline
\toprule
Outcome & \(E[X_1 | X]\) & \(E[X_2 | X]\) & \(X\) & \(\mathsf P\) &
\(S(x)\) & \(g(S)\) & \(\mathsf{Q}\)\tabularnewline
\midrule
\endfirsthead
\toprule
Outcome & \(E[X_1 | X]\) & \(E[X_2 | X]\) & \(X\) & \(\mathsf P\) &
\(S(x)\) & \(g(S)\) & \(\mathsf{Q}\)\tabularnewline
\midrule
\endhead
1 & 0 & 0 & 0 & 4/16 & 12/16 & 0.8660254 & 0.1339746\tabularnewline
2 & 0 & 1 & 1 & 2/16 & 10/16 & 0.7905694 & 0.07545599\tabularnewline
3 & 9 & 0 & 9 & 2/16 & 8/16 & 0.7071068 & 0.08346263\tabularnewline
4, 5 & 9 2/3 & 1 / 3 & 10 & 3/16 & 5/16 & 0.5590170 &
0.1480898\tabularnewline
6 & 10 & 1 & 11 & 1/16 & 4/16 & 0.5 & 0.05901699\tabularnewline
7 & 0 & 90 & 90 & 2/16 & 2/16 & 0.3535534 & 0.1464466\tabularnewline
8 & 9 & 90 & 99 & 1/16 & 1/16 & 0.25 & 0.1035534\tabularnewline
9 & 10 & 90 & 100 & 1/16 & 0 & 0 & 0.25\tabularnewline
& & & & & & &\tabularnewline
\(\E\) & 4.75 & 22.75 & 27.5 & & & &\tabularnewline
\(\E_{\mathsf{Q}}\) & 6.2048488 & 45.183836 & 51.38869 & & &
&\tabularnewline
\bottomrule
\end{longtable}

\normalsize

Actuaries commonly perform this type of calculation, often with
catastrophe model output. They make the simplifying assumption that
\(X_i=\E[X_i\mid X]\) when all rows are distinct. However, the ordering
problem illustrated does occur in real data, especially when limits and
retentions are involved. \Cref{main-theorem} shows how to rigorously
resolve the ordering problem to compute the unique natural allocation.

\hypertarget{example-2-thick-tailed-and-thin-tailed-lines}{%
\subsection{\texorpdfstring{Example 2: Thick-tailed and Thin-tailed
Lines
\label{example-2}}{Example 2: Thick-tailed and Thin-tailed Lines }}\label{example-2-thick-tailed-and-thin-tailed-lines}}

Example 2 is based on two distributions with mean 1. \(X_1\) is thin
tailed with a gamma distribution with coefficient of variation 0.25.
\(X_2\) is a translated lognormal \(X_2=0.3 + 0.7X'_2\), where \(X'_2\)
has a coefficient of variation \(1.25/0.7\), resulting in a coefficient
of variation of \(1.25\) for \(X_2\). Total assets are 12.5,
corresponding to capital at a 563 year return period. The aggregate
coefficient of variation is 0.637 in total. \(X_1\) approximates a
moderate limit book of commercial auto and \(X_2\) a catastrophe exposed
property book with a stable attritional loss component. In aggregate the
portfolio would be considered as volatile. The distortion \(g\) uses a
Wang transform with \(\lambda=0.755\), producing a 10 percent return on
assets. The natural allocation premium is 1.057 for \(X_1\) and 1.889
for line (94.6 percent and 52.4 percent loss ratios), producing an
overall 67.6 percent loss ratio, all without expenses. The profit is
realistic for a gross portfolio with these characteristics.

\Cref{fig:eg-2} illustrates the theory we have developed. We refer to
the charts as \((r,c)\) for row \(r=1,2,3,4\) and column \(c=1,2,3\),
starting at the top left. The horizontal axis shows the asset level in
all charts except \((3,3)\) and \((4,3)\), where it shows probability,
and \((1,3)\) where it shows loss. Blue represents the thin tailed line,
orange thick tailed and green total. When both dashed and solid lines
appear on the same plot, the solid represent risk-adjusted and dashed
represent non-risk-adjusted functions. Here is the key.

\begin{itemize}
\tightlist
\item
  \((1,1)\) shows density for \(X_1, X_2\) and \(X=X_1+X_2\); the two
  lines are independent. Both lines have mean 1.
\item
  \((1,2)\): log density; comparing tail thickness.
\item
  \((1,3)\): the bivariate log-density. This plot illustrates where
  \((X_1, X_2)\) \emph{lives}. The diagonal lines show \(X=k\) for
  different \(k\). These show that large values of \(X\) correspond to
  large values of \(X_2\), with \(X_1\) about average.
\item
  \((2,1)\): the form of \(\kappa_i\) is clear from looking at
  \((1,3)\). \(\kappa_1\) peaks at \(x=2.15\) with maximum value 1.14.
  Thereafter it declines to 1.0. \(\kappa_2\) is monotonically
  increasing.
\item
  \((2,2)\): The \(\alpha_i\) functions. For small \(x\) the expected
  proportion of losses is approximately 50/50, since the means are
  equal. As \(x\) increases \(X_2\) dominates. The two functions sum to
  1.
\item
  \((2,3)\): The solid lines are \(\beta_i\) and the dashed lines
  \(\alpha_i\) from \((2,2)\). Since \(\alpha_1\) decreases
  \(\beta_1(x)\le \alpha_1(x)\). This can lead to \(X_1\) having a
  negative margin in low asset layers. \(X_2\) is the opposite.
\item
  \((3,1)\): illustrates premium and margin determination by asset layer
  for \(X_1\) using \cref{eq:alpha-S} and \cref{eq:beta-gS}. For low
  asset layers \(\alpha_1(x) S(x)>\beta_1(x) g(S(x))\) (dashed above
  solid) corresponding to a negative margin. Beyond about \(x=1.38\) the
  lines cross and the margin is positive.
\item
  \((4,1)\): shows the same thing for \(X_2\). Since \(\alpha_2\) is
  increasing, \(\beta_2(x)>\alpha_2(x)\) for all \(x\) and so all layers
  get a positive margin. The solid line \(\beta_2 gS\) is above the
  dashed \(\alpha_2 S\) line.
\item
  \((3,2)\): the layer margin densities. For low asset layers premium is
  fully funded by loss with zero overall margin. \(X_2\) requires a
  positive margin and \(X_1\) a negative one, reflecting the benefit the
  thick line receives from pooling in low layers. The overall margin is
  always non-negative. Beyond \(x=1.38\), \(X_1\)'s margin is also
  positive.
\item
  \((4,2)\): the cumulative margin in premium by asset level. Total
  margin is zero in low \emph{dollar-swapping} layers and then
  increases. It is always non-negative. The curves in this plot are the
  integrals of those in \((3,2)\) from 0 to \(x\).
\item
  \((3,3)\): shows stand-alone loss \((1-S(x),x)=(p,q(p))\) (dashed) and
  premium \((1-g(S(x)),x)=(p,q(1-g^{-1}(1-p))\) (solid, shifted left)
  for each line and total. The margin is the shaded area between the
  two. Each set of three lines (solid or dashed) does not add up
  vertically because of diversification. The same distortion \(g\) is
  applied to each line to the stand-alone \(S_{X_i}\). It is calibrated
  to produce a 10 percent return overall. On a stand-alone basis,
  calculating capital by line to the same return period as total,
  \(X_1\) is priced to a 83.5 percent loss ratio and \(X_2\) a 51.8
  percent, producing an average 64.0 percent, vs.~67.6 percent on a
  combined basis. Returns are 28.7 percent and 9.6 percent respectively,
  averaging 10.9 percent, vs 10 percent on a combined basis.
\item
  \((4,3)\): shows the natural allocation of loss and premium to each
  line. The total (green) is the same as \((3,3)\). For each \(i\),
  dashed shows \((p, \E[X_i\mid X=q(p)])\), i.e.~the expected loss
  recovery conditioned on total losses \(X=q(p)\), and solid shows
  \((p, \E[X_i\mid X=q(1-g^{-1}(1-p))])\), i.e.~the natural premium
  allocation (see the bottom row of
  \cref{tab:natural-allocation-summary}). Here the solid and dashed
  lines \emph{add up} vertically: they are allocations of the total.
  Looking vertically above \(p\), the shaded areas show how the total
  margin at that loss level is allocated between lines. \(X_1\) mostly
  consumes assets at low layers, and the blue area is thicker for small
  \(p\), corresponding to smaller total losses. For \(p\) close to 1,
  large total losses, margin is dominated by \(X_2\) and in fact \(X_1\)
  gets a slight credit (dashed above solid). The change in shape of the
  shaded margin area for \(X_1\) is particularly evident: it shows
  \(X_1\) benefits from pooling and requires a lower overall margin. The
  natural allocation returns are 5.3, 10.6 and 10.0 percent. The overall
  premium to surplus leverage is 0.308 to 1; on an allocated basis it is
  0.986 and 0.223 to 1 for each line.
\end{itemize}

There may appear to be a contradiction between figures \((3,2)\) and
\((4,3)\) but it should be noted that a particular \(p\) value in
\((4,3)\) refers to different events on the dotted and solid lines.

Plots \((3,3)\) and \((4,3)\) explain why the thick line requires
relatively more margin (0.698 out of a total 0.728): its shape does not
change when it is pooled with \(X_1\). In \((3,3)\) the green shaded
area is essentially an upwards shift of the orange, and the orange areas
in \((3,3)\) and \((3,4)\) are essentially the same. This means that
adding \(X_1\) has virtually no impact on the shape of \(X_2\); it is
like adding a constant, as discussed after the proof of
\cref{prop:pos-margin}. This can also be seen in \((4,3)\) where the
blue region is almost a straight line.

These shifts are illustrated in \cref{fig:eg-2-detail}. The left hand
plot shows that the stand-alone margin area for \(X_2\) shifted up by 1,
the mean of \(X_1\), lies almost exactly over the total margin area
(orange over green). The right hand plot compares the stand-alone margin
areas for each line with the natural margin allocation. \(X_2\) is
shifted up by 1 for clarity. Again, there is essentially no difference
for \(X_2\), especially in the expensive, large loss states where \(p\)
is close to 1. \(X_1\) is completely transformed: its margin is much
lower (smaller area) and it is concentrated in low \(p\), small total
loss events. \(X_1\) actually gets a credit at large total losses
because its losses will be close to the mean, and hence low ambiguity,
whereas \(X_2\)'s loss will be large and more ambiguous.

\begin{figure}
\centering
\includegraphics[width=1\textwidth,height=\textheight]{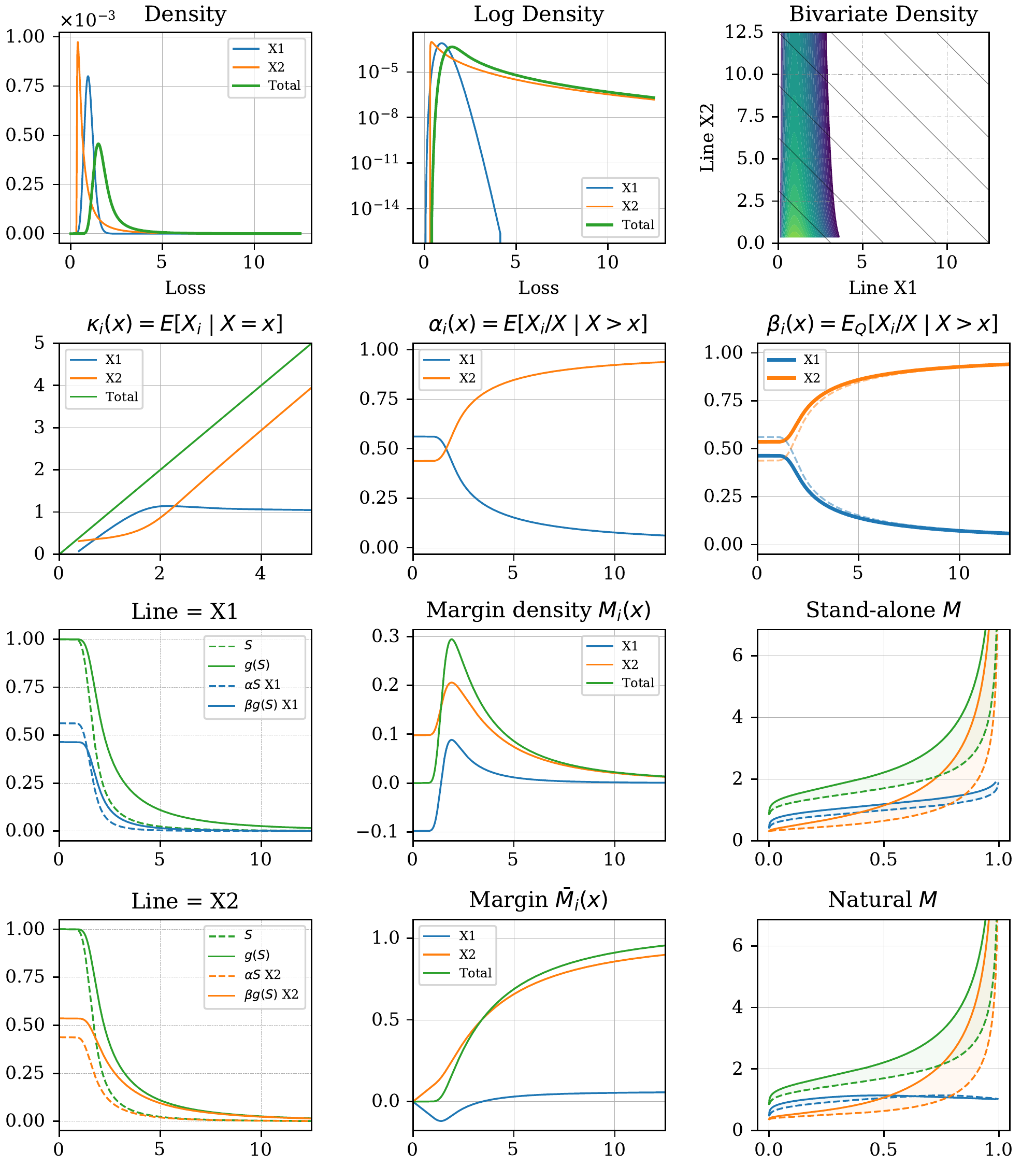}
\caption{A thin tailed line combined with a thick tailed line. See text
for a key to the graphs. \label{fig:eg-2}}
\end{figure}

\begin{figure}
\centering
\includegraphics{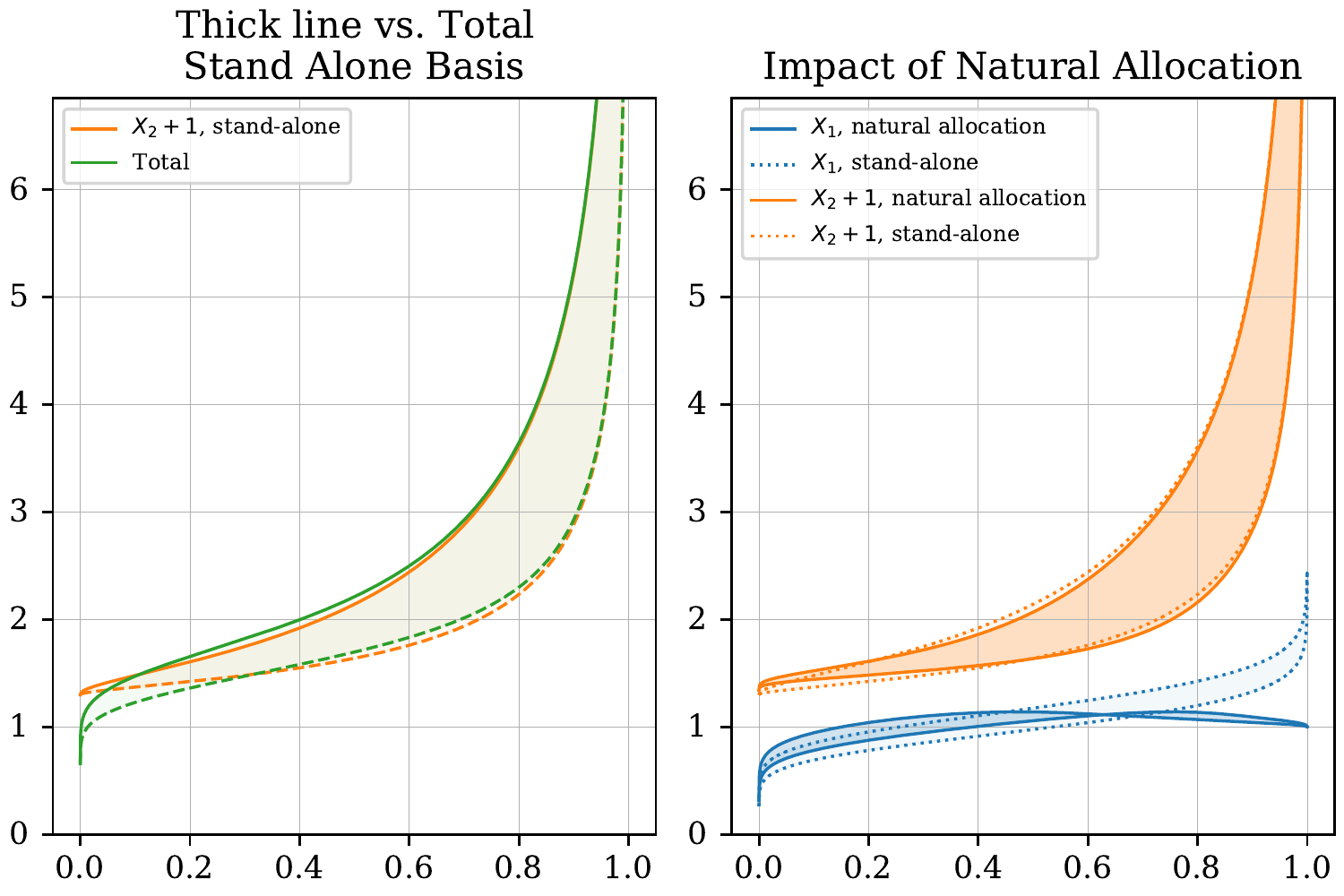}
\caption{Impact of the natural allocation by line. Left: stand-alone
margin area for thick line shifted up by mean of thin line lies almost
exactly over the margin area for the total. Right: stand-alone
vs.~natural allocation margin areas, showing minimal impact for thick
line but dramatic impact on thin. \label{fig:eg-2-detail}}
\end{figure}

\hypertarget{conclusions}{%
\section{Conclusions}\label{conclusions}}

We have explored how the shape of risk impacts the price of risk
transfer in an imperfect, incomplete market---a holy grail for
practicing actuaries. We provide a natural, assumption-free allocation
of aggregate premium to policy, incorporating an allocation of capital
consistent with law invariance in order to price individual policies in
the presence of frictional costs. Premium by policy is determined by the
relative consumption of low and high ambiguity assets in a complex, but
intuitively reasonable manner. The margin by line is driven more by
behavior in solvent states than in default states (default states are
often the major focus of allocation methods). Premium is interpreted as
the value of insurance cash flows under a risk-dependent state price
density. This is in contrast to most other approaches that adopt a cost
allocation perspective. The natural allocation is insured-centric,
rather than insurer-centric.

The natural allocation depends on the fact that DRMs are law invariant
and comonotonic additive. It does not apply to more general convex risk
measures. Notwithstanding this limitation, it provides a useful and
practical method that can be applied by an insurance company to
understand how to share its diversification benefit between policies.

Further research is needed to determine how the shapes of the \(X_i\)
interact to determine the natural allocation, as well as the impact of
different distortions \(g\) and capitalization standards. The underlying
distortion can be calibrated to market prices. Market calibration to cat
bond data and standard intermediated insurance data would reveal how
much of the cost of capital arises from frictional costs and how much
from shape of risk. The work can also be extended to dynamic portfolios
and then applied to questions of optimal risk pooling under costly
capital.

\hypertarget{references}{%
\section*{References}\label{references}}
\addcontentsline{toc}{section}{References}

\hypertarget{refs}{}
\begin{cslreferences}
\leavevmode\hypertarget{ref-Acerbi2002b}{}%
Acerbi, Carlo, 2002, Spectral measures of risk: A coherent
representation of subjective risk aversion, \emph{Journal of Banking \&
Finance} 26, 1505--1518.

\leavevmode\hypertarget{ref-Anscombe1963}{}%
Anscombe, F. J., and R. J. Aumann, 1963, A Definition of Subjective
Probability, \emph{The Annals of Mathematical Statistics} 34, 199--205.

\leavevmode\hypertarget{ref-Bauer2013}{}%
Bauer, Daniel, and George H Zanjani, 2013, Capital Allocation and its
Discontents, \emph{Handbook of insurance} (Springer, New York, NY).

\leavevmode\hypertarget{ref-Bauerle2006}{}%
Bäuerle, Nicole, and Alfred Müller, 2006, Stochastic orders and risk
measures: Consistency and bounds, \emph{Insurance: Mathematics and
Economics} 38, 132--148.

\leavevmode\hypertarget{ref-Bodoff2007}{}%
Bodoff, Neil M., 2007, Capital Allocation by Percentile Layer,
\emph{Variance} 3, 13--30.

\leavevmode\hypertarget{ref-Tsanakas2016a}{}%
Boonen, Tim J., Andreas Tsanakas, and Mario V. Wüthrich, 2017, Capital
allocation for portfolios with non-linear risk aggregation,
\emph{Insurance: Mathematics and Economics} 72, 95--106.

\leavevmode\hypertarget{ref-Borch1982}{}%
Borch, Karl, 1982, Additive Insurance Premiums: A Note, \emph{The
Journal of Finance} 37, 1295--1298.

\leavevmode\hypertarget{ref-Borwein2010}{}%
Borwein, Jonathan M, and Jon D Vanderwerff, 2010, \emph{Convex Functions
- Construction, Characterizations and Counterexamples} (Cambridge
University Press).

\leavevmode\hypertarget{ref-Campi2013}{}%
Campi, Luciano, Elyès Jouini, and Vincent Porte, 2013, Efficient
portfolios in financial markets with proportional transaction costs,
\emph{Mathematics and Financial Economics} 7, 281--304.

\leavevmode\hypertarget{ref-Carlier2003}{}%
Carlier, G., and R. A. Dana, 2003, Core of convex distortions of a
probability, \emph{Journal of Economic Theory} 113, 199--222.

\leavevmode\hypertarget{ref-Castagnoli2002}{}%
Castagnoli, Erio, Fabio Maccheroni, and Massimo Marinacci, 2002,
Insurance premia consistent with the market, \emph{Insurance:
Mathematics and Economics} 31, 267--284.

\leavevmode\hypertarget{ref-Castagnoli2004}{}%
Castagnoli, Erio, Fabio Maccheroni, and Massimo Marinacci, 2004, Choquet
insurance pricing: A caveat, \emph{Mathematical Finance} 14, 481--485.

\leavevmode\hypertarget{ref-Chateauneuf1996a}{}%
Chateauneuf, A., R. Kast, and A. Lapied, 1996, Choquet pricing for
financial markets with frictions, \emph{Mathematical Finance} 6,
323--330.

\leavevmode\hypertarget{ref-Culp2009}{}%
Culp, Christopher L., and Kevin J. O'Donnell, 2009, Catastrophe
reinsurance and risk capital in the wake of the credit crisis, \emph{The
Journal of Risk Finance} 10, 430--459.

\leavevmode\hypertarget{ref-Cummins1988}{}%
Cummins, J. David, 1988, Risk-Based Premiums for Insurance Guarantee
Funds, \emph{Journal of Finance} 43, 823--839.

\leavevmode\hypertarget{ref-Cummins1990a}{}%
Cummins, J. David, 1990, Multi-Period Discounted Cash Flow Rate-making
Models in Property-Liability Insurance, \emph{Journal of Risk and
Insurance} 57, 79--109.

\leavevmode\hypertarget{ref-Cummins2000}{}%
Cummins, J. David, 2000, Allocation of capital in the insurance
industry, \emph{Risk Management and Insurance Review}, 7--28.

\leavevmode\hypertarget{ref-Cummins2005}{}%
Cummins, J. David, and Richard D. Phillips, 2005, Estimating the Cost of
Equity Capital for Property-Liability Insurers, \emph{Journal of Risk
and Insurance} 72, 441--478.

\leavevmode\hypertarget{ref-Doherty1990a}{}%
D'Arcy, Stephen P., and Neil A Doherty, 1990, Adverse Selection, Private
Information, and Lowballing in Insurance Markets, \emph{The Journal of
Business} 63, 145.

\leavevmode\hypertarget{ref-Delbaen2000}{}%
Delbaen, Freddy, 2000, Coherent risk measures (Pisa Notes),
\emph{Blätter der DGVFM} 24, 733--739.

\leavevmode\hypertarget{ref-Delbaen1994}{}%
Delbaen, Freddy, and Walter Schachermayer, 1994, A general version of
the fundamental theorem of asset pricing, \emph{Mathematische Annalen}
300, 463--520.

\leavevmode\hypertarget{ref-Denault2001}{}%
Denault, Michel, 2001, Coherent allocation of risk capital, \emph{The
Journal of Risk} 4, 1--34.

\leavevmode\hypertarget{ref-Denneberg1994}{}%
Denneberg, Dieter, 1994, \emph{Non-additive measure and integral}
(Kluwer Academic, Dordrecht).

\leavevmode\hypertarget{ref-Denuit2012}{}%
Denuit, Michel, and Jan Dhaene, 2012, Convex order and comonotonic
conditional mean risk sharing, \emph{Insurance: Mathematics and
Economics} 51, 265--270.

\leavevmode\hypertarget{ref-Denuit2020e}{}%
Denuit, Michel M., and C Y Robert, 2020, Risk Reduction by Conditional
Mean Risk Sharing With Application to Collaborative Insurance,.
Discussion paper (UC Louvain).

\leavevmode\hypertarget{ref-DeWaegenaere2000}{}%
De Waegenaere, Anja, 2000, Arbitrage and Viability in Insurance Markets,
\emph{GENEVA Papers on Risk and Insurance Theory} 25, 81--99.

\leavevmode\hypertarget{ref-DeWaegenaere2003}{}%
De Waegenaere, Anja, Robert Kast, and Andre Lapied, 2003, Choquet
pricing and equilibrium, \emph{Insurance: Mathematics and Economics} 32,
359--370.

\leavevmode\hypertarget{ref-Dhaene2012b}{}%
Dhaene, Jan, Alexander Kukush, Daniel Linders, and Qihe Tang, 2012,
Remarks on quantiles and distortion risk measures, \emph{European
Actuarial Journal} 2, 319--328.

\leavevmode\hypertarget{ref-Dietz2017}{}%
Dietz, Simon, and Oliver Walker, 2017, Ambiguity and Insurance: Capital
Requirements and Premiums, \emph{Journal of Risk and Insurance} 86,
213--235.

\leavevmode\hypertarget{ref-Doherty1986}{}%
Doherty, Neil A., and James R. Garven, 1986, Price Regulation in
Property-Liability Insurance: A Contingent-Claims Approach,
\emph{Journal of Finance} 41, 1031--1050.

\leavevmode\hypertarget{ref-Dybvig1989}{}%
Dybvig, Philip H, and Stephen A Ross, 1989, Arbitrage, \emph{Finance}
(Palgrave Macmillan, London).

\leavevmode\hypertarget{ref-Epstein2008}{}%
Epstein, Larry G, and Martin Schneider, 2008, Ambiguity, Information
Quality, and Asset Pricing, \emph{Journal of Finance} LXIII, 197--228.

\leavevmode\hypertarget{ref-Follmer2011}{}%
Föllmer, Hans, and Alexander Schied, 2011, \emph{Stochastic finance: an
introduction in discrete time}. Third Edit. (Walter de Gruyter).

\leavevmode\hypertarget{ref-Froot2008}{}%
Froot, Kenneth A., and Paul G J O'Connell, 2008, On the pricing of
intermediated risks: Theory and application to catastrophe reinsurance,
\emph{Journal of Banking and Finance} 32, 69--85.

\leavevmode\hypertarget{ref-Grundl2007}{}%
Grundl, Helmut, and Hato Schmeiser, 2007, Capital allocation for
insurance companies---What Good Is It?, \emph{Journal of Risk and
Insurance} 74.

\leavevmode\hypertarget{ref-Huber1981}{}%
Huber, Peter J., 1981, \emph{Robust statistics} (John Wiley \& Sons,
Inc.).

\leavevmode\hypertarget{ref-Ibragimov2010}{}%
Ibragimov, Rustam, Dwight Jaffee, and Johan Walden, 2010, Pricing and
Capital Allocation for Multiline Insurance Firms, \emph{Journal of Risk
and Insurance} 77, 551--578.

\leavevmode\hypertarget{ref-Jiang2020}{}%
Jiang, Wenjun, Marcos Escobar-Anel, and Jiandong Ren, 2020, Optimal
insurance contracts under distortion risk measures with ambiguity
aversion, \emph{ASTIN Bulletin} 50, 1--28.

\leavevmode\hypertarget{ref-Jouini2001}{}%
Jouini, Elyès, and Hédi Kallal, 2001, Efficient trading strategies in
the presence of market frictions, \emph{Review of Financial Studies} 14,
343--369.

\leavevmode\hypertarget{ref-Klibanoff2005}{}%
Klibanoff, Peter, Massimo Marinacci, and Sujoy Mukerji, 2005, A smooth
model of decision making under ambiguity, \emph{Econometrica} 73,
1849--1892.

\leavevmode\hypertarget{ref-Kusuoka2001}{}%
Kusuoka, Shigeo, 2001, On law invariant coherent risk measures,
\emph{Advances in Mathematical Economics} 3, 83--95.

\leavevmode\hypertarget{ref-Machina1997}{}%
Machina, Mark J., and John W. Pratt, 1997, Increasing Risk: Some Direct
Constructions, \emph{Journal of Risk and Uncertainty} 14, 103--127.

\leavevmode\hypertarget{ref-Major2018}{}%
Major, John A., 2018, Distortion Measures on Homogeneous Financial
Derivatives, \emph{Insurance: Mathematics and Economics} 79, 82--91.

\leavevmode\hypertarget{ref-Mango2005a}{}%
Mango, Donald, 2005, Insurance Capital as a Shared Asset, \emph{Astin
Bulletin} 35, 471--486.

\leavevmode\hypertarget{ref-Mango2013}{}%
Mango, Donald, John Major, Avraham Adler, and Claude Bunick, 2013,
Capital Tranching: A RAROC Approach to Assessing Reinsurance Cost
Effectiveness, \emph{Variance} 7, 82--91.

\leavevmode\hypertarget{ref-Meyers1996}{}%
Meyers, Glenn G, 1996, The competitive market equilibrium risk load
formula for catastrophe ratemaking, \emph{PCAS}, 563--600.

\leavevmode\hypertarget{ref-Mildenhall2017b}{}%
Mildenhall, Stephen J, 2017, Actuarial Geometry, \emph{Risks} 5.

\leavevmode\hypertarget{ref-Myers1987}{}%
Myers, Stewart C, and Richard A Cohn, 1987, A discounted cash flow
approach to property-liability insurance rate regulation, \emph{Fair
rate of return in property-liability insurance} (Springer).

\leavevmode\hypertarget{ref-Myers2001}{}%
Myers, Stewart C, and James A Read Jr., 2001, Capital allocation for
insurance companies, \emph{Journal of Risk and Insurance} 68, 545--580.

\leavevmode\hypertarget{ref-Phillips1998}{}%
Phillips, Richard D., J. David Cummins, and Franklin Allen, 1998,
Financial Pricing of Insurance in the Multiple-Line Insurance Company,
\emph{Journal of Risk and Insurance} 65, 597--636.

\leavevmode\hypertarget{ref-Robert2014}{}%
Robert, Christian Y., and Pierre-E. Therond, 2014, Distortion Risk
Measures, Ambiguity Aversion and Optimal Effort, \emph{ASTIN Bulletin}
44, 277--302.

\leavevmode\hypertarget{ref-Ross1978}{}%
Ross, Stephen A, 1978, A Simple Approach to the Valuation of Risky
Streams, \emph{The Journal of Business} 51, 453.

\leavevmode\hypertarget{ref-Rothschild1970}{}%
Rothschild, Michael, and Joseph E. Stiglitz, 1970, Increasing risk: I. A
definition, \emph{Journal of Economic Theory} 2, 225--243.

\leavevmode\hypertarget{ref-Saumard2014}{}%
Saumard, Adrien, and Jon A. Wellner, 2014, Log-concavity and strong
log-concavity: a review, \emph{Statistics Surveys} 8, 45--114.

\leavevmode\hypertarget{ref-Schmeidler1986}{}%
Schmeidler, David, 1986, Integral representation without additivity,
\emph{Proceedings of the American Mathematical Society} 97, 255--255.

\leavevmode\hypertarget{ref-Schmeidler1989}{}%
Schmeidler, David, 1989, Subjective Probability and Expected Utility
without Additivity, \emph{Econometrica} 57, 571--587.

\leavevmode\hypertarget{ref-Shapiro2009}{}%
Shapiro, Alexander, Darinka Dentcheva, and Andrzej Ruszczyński, 2009,
\emph{Lectures on Stochastic Programming}. May.

\leavevmode\hypertarget{ref-Sherris2006a}{}%
Sherris, Michael, 2006, Solvency, capital allocation, and fair rate of
return in insurance, \emph{Journal of Risk and Insurance} 73, 71--96.

\leavevmode\hypertarget{ref-Svindland2010}{}%
Svindland, Gregor, 2009, Continuity properties of law-invariant
(quasi-)convex risk functions on \(L^\infty\), \emph{Mathematics and
Financial Economics} 3, 39--43.

\leavevmode\hypertarget{ref-Tasche1999}{}%
Tasche, Dirk, 1999, Risk contributions and performance measurement,
\emph{Report of the Lehrstuhl fur mathematische Statistik, TU Munchen},
1--26.

\leavevmode\hypertarget{ref-Thaler1998}{}%
Thaler, Richard H, 1988, Anomalies: The Winner's Curse, \emph{Journal of
Economic Perspectives} 2, 191--202.

\leavevmode\hypertarget{ref-Tsanakas2003a}{}%
Tsanakas, Andreas, and Christopher Barnett, 2003, Risk capital
allocation and cooperative pricing of insurance liabilities,
\emph{Insurance: Mathematics and Economics} 33, 239--254.

\leavevmode\hypertarget{ref-Venter1991}{}%
Venter, Gary G., 1991, Premium Calculation Implications of Reinsurance
Without Arbitrage, \emph{ASTIN Bulletin} 21, 223--230.

\leavevmode\hypertarget{ref-Venter2006}{}%
Venter, Gary G., John A. Major, and Rodney E. Kreps, 2006, Marginal
Decomposition of Risk Measures, \emph{ASTIN Bulletin} 36, 375--413.

\leavevmode\hypertarget{ref-Wang1995}{}%
Wang, Shaun, 1995, Insurance pricing and increased limits ratemaking by
proportional hazards transforms, \emph{Insurance: Mathematics and
Economics} 17, 43--54.

\leavevmode\hypertarget{ref-Wang1996}{}%
Wang, Shaun, 1996, Premium Calculation by Transforming the Layer Premium
Density, \emph{ASTIN Bulletin} 26, 71--92.

\leavevmode\hypertarget{ref-Wang1997}{}%
Wang, Shaun, Virginia Young, and Harry Panjer, 1997, Axiomatic
characterization of insurance prices, \emph{Insurance: Mathematics and
Economics} 21, 173--183.

\leavevmode\hypertarget{ref-Yaari1987}{}%
Yaari, Menahem E., 1987, The Dual Theory of Choice under Risk, \emph{The
Econometric Society} 55, 95--115.

\leavevmode\hypertarget{ref-Zhang2002}{}%
Zhang, Jiankang, 2002, Subjective Ambiguity, Expected Utility and
Choquet Expected Utility, \emph{Economic Theory} 20, 159--181.
\end{cslreferences}

\end{document}